\newtheorem{theorem}{Theorem}[section]
\newtheorem{corollary}[theorem]{Corollary}
\newtheorem{proposition}[theorem]{Proposition}
\theoremstyle{remark}
\newtheorem{remark}[theorem]{Remark}
\newcommand{\R}{\mathbb{R}}                 
\newcommand{\N}{\mathbb{N}}                 
\newcommand{\Z}{\mathbb{Z}}                 
\newcommand{\Q}{\mathbb{Q}}           
\newcommand{\dom}{D}                        
\newcommand{\D}{\mathcal{D}}                
\newcommand{\1}{\mathbbm{1}}               
\renewcommand{\mat}[1]{\boldsymbol{#1}}       
\newcommand{\norm}[1]{\left\|#1\right\|}        
\newcommand{\abs}[1]{\left|#1\right|}     
\newcommand{\scal}[1]{\left\langle#1\right\rangle}      
\newcommand{\text}[1]{\mathrm{#1}}
\newcommand{\mod}{\mathrm{\,mod\,}}           
\begin{document}
\title{Cantor spectra of magnetic chain graphs}
\author{Pavel Exner$^1$, Daniel Va\v{s}ata$^2$}
\address{$^1$ Doppler Institute for Mathematical Physics and Applied Mathematics, Czech
Technical University in Prague, B\v{r}ehov\'{a} 7, 11519 Prague, and Nuclear Physics
Institute ASCR, 25068 \v{R}e\v{z} near Prague, Czech Republic}
\address{$^2$ Department of Applied Mathematics, Faculty of Information Technology, Czech Technical University in Prague,
Th\'{a}kurova 9, Prague, 16000, Czech Republic}
\eads{\mailto{exner@ujf.cas.cz}, \mailto{daniel.vasata@fit.cvut.cz}}

\begin{abstract}
    We demonstrate a one-dimensional magnetic system can exhibit a
    Cantor-type spectrum using an example of a chain graph with
    $\delta$ coupling at the vertices exposed to a magnetic field
    perpendicular to the graph plane and varying along the chain. If the
    field grows linearly with an irrational slope, measured in terms of
    the flux through the loops of the chain, we demonstrate the
    character of the spectrum relating it to the almost Mathieu
    operator.
\end{abstract}

\pacs{02.30.Tb, 03.65.Ge, 03.65.Db}

%
\vspace{2pc} \noindent{\it Keywords}: quantum chain graph, magnetic
field, almost Mathieu operator, Cantor spectrum

%
\submitto{\JPA}

%
%
%

%
\section{Introduction}\label{sec:introduction}

The observation that spectra of quantum system may exhibit fractal
properties was made first by Azbel \cite{Azbel64} but it really
caught the imagination when Hofstadter \cite{Hofstadter1976} made
the structure visible; then it triggered a long and fruitful
investigation of this phenomenon. On the mathematical side the
question was translated into the analysis of the almost Mathieu
equation which culminated recently in the proof of the ``Ten Martini
Problem'' by Avila and Jitomirskaya \cite{Avila2009}. On the
physical side, the effect remained theoretical for a long time.
Since the mentioned seminal papers, following an earlier work of
Peierls \cite{Peierls33} and Harper \cite{Harper55}, the natural
setting considered was a lattice in a homogeneous magnetic field
because it provided the needed two length scales, generically
incommensurable, from the lattice spacing and the cyclotron radius.
The spectrum of the corresponding Hamiltonian has fractal properties
as one can establish rigorously \cite{Bruning2007} by adapting deep
results about appropriate difference operators
\cite{Avila06,Avila2009}. It was not easy to observe the effect,
however, and the first experimental demonstration of such a spectral
character was done instead in a microwave waveguide system with
suitably placed obstacles simulating the almost Mathieu relation
\cite{Kuhl98}. Only recently an experimental realization of the
original concept was achieved using a graphene lattice
\cite{Dean2013,Ponomarenko2013}.

The aim of this note is to show that fractal spectra can arise also
in magnetic systems extended in a single direction only under two
conditions: the structure should have a nontrivial topology and the
magnetic field should vary along it. We are going to demonstrate
this claim using a simple example of a chain graph consisting of an
array of identical rings connected at the vertices in the simplest
nontrivial way known as the $\delta$ coupling and exposed to the
magnetic field perpendicular to the graph plane the intensity of
which increases linearly along the chain, with the slope $\alpha$
measured in terms of the number of the flux quanta through the ring.
This is the decisive quantity. It turns out that when $\alpha$ is
rational, the spectrum has a band-gap structure which allows for
description in terms of the Floquet-Bloch theory. On the other hand,
when $\alpha$ is irrational, the spectrum is a Cantor set, that is,
a nowhere dense closed set without isolated points. The way to prove
these results is to translate the original spectral problem into an
equivalent one involving a suitable self-adjoint operator on
$\ell^2(\Z)$ which is a useful and well-known trick in the quantum
graph theory, see e.g.
\cite{Cattaneo1997,Exner1997,Pankrashkin2012}. As a result, in the
rational case we rephrase the question as spectral analysis of a
simple Laurent operator, while in the irrational case we reduce the
problem to investigation of the almost Mathieu operator, for which
the Cantor property of the spectrum is known as mentioned above
\cite{Avila2009}.

Let us briefly describe the contents of the paper. In the next
section we will define properly the operator that serves as the
magnetic chain Hamiltonian. In Sec.~\ref{sec:duality_disc_opp} we
explain our main technical tool, a duality between the quantum graph
in question and an appropriate Jacobi operator, whose spectrum is
described in Sec.~\ref{sec:spectrum_general}. Finally,
Sec.~\ref{sec:spectrum_linear} contains our main result with some
corollaries and a discussion; it is followed by a few concluding
remarks.

%
\section{Magnetic chain graph}\label{sec:mgchain}

Quantum graphs, which is a short name for Schr\"odinger operators
the configuration space of which has the structure of a metric
graph, are an important class of models in quantum physics. They are
interesting both physically as models of various nanostructures, as
well as from the viewpoint of their mathematical properties; we
refer the reader to the recent monograph of Berkolaiko and Kuchment
\cite{Berkolaiko2013} for a thorough presentation and a rich
bibliography. One important class is represented by magnetic quantum
graphs, cf. for instance \cite{Kostrykin2003}.

Let us describe the particular system we will be interested in. It
is a metric graph $\Gamma$ consisting of an infinite linear chain of rings of
unit radius, centred at equally spaced points laying at a straight
line and touching their neighbours at the left and right. The
vertices are parametrized by integers $\Z$ and both the upper edge
$e_j^U$ and lower edge $e_j^L$ connecting the $j$-th vertex $v_j$
and $(j+1)$-th vertex $v_{j+1}$, which forms the $j$-th ring of the
graph, are parametrized by intervals $(0, \pi)$ directed along the
chain. Thus, if the initial vertex of an edge $e$ is denoted by
$\iota e$ and the terminal vertex by $\tau e$, then $\iota e_j^U =
\iota e_j^L = v_j$ and $\tau e_j^U = \tau e_j^L = v_{j+1}$. We
assume that the system is exposed to a magnetic field perpendicular
to the graph plane, which in contrast to \cite{Exner2015} is not
homogenerous but may vary along the chain. The Hamiltonian is the
graph version of the magnetic Schr\"odinger operator acting as
$\frac{1}{2m}\big(-i\hbar\nabla- \frac{e}{c}A \big)^2$ at each edge,
where $A$ stands for the tangential component of the corresponding
vector potential at a given point. However, it is known that in a
magnetic chain there are only the fluxes through the loops that
count, see \cite[Corollary 2.6.3]{Berkolaiko2013}, and therefore we
may, without loss of generality, choose a gauge in which the
(tangent component of the) vector potential $A$ is constant at each
particular ring; we denote by $A_j\in \R$ its value at the $j$-th
ring and by $\mat A = \{A_j\}_{j \in \Z}$  the sequence of all such
local vector potentials.

The state Hilbert space corresponding to a non-relativistic charged
spinless particle confined to the graph $\Gamma$ is $L^2(\Gamma)$.
For a function $\psi \in L^2(\Gamma)$ we further denote its
components on the upper and lower semicircles $e_j^U$ and $e_j^L$ of
the $j$-th ring by $\psi_j^U$ and $\psi_j^L$, respectively. The
whole system is depicted in Figure \ref{fig:chain_graph}.
\begin{figure}
    \begin{center}
        \includegraphics{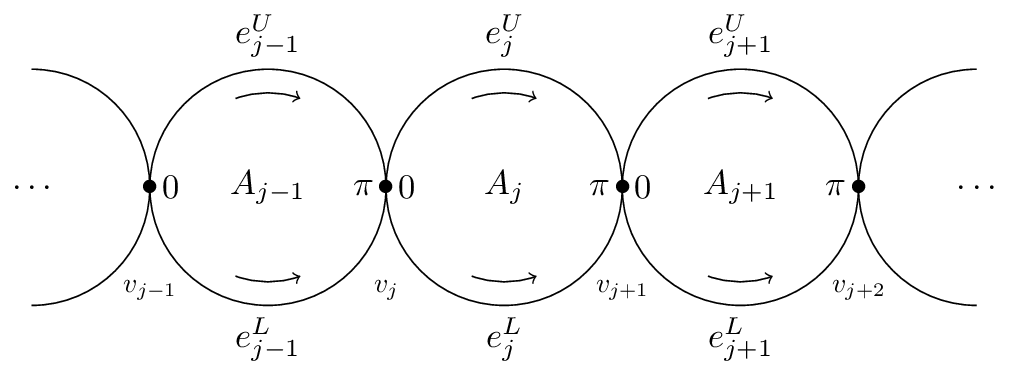}
    \end{center}
    \caption{Schematic depiction of the magnetic chain graph $\Gamma$.}
    \label{fig:chain_graph}
\end{figure}
Since the actual values of physical quantities will play no role in
the discussion we employ the rational system of units putting
$\hbar=2m=1$ and $\frac{e}{c}=1$. The Hamiltonian is then simply
$-\Delta_{\gamma, \mat A} = -\D^2$, where $\D$ is the
quasi-derivative which depends locally on the parametrisation of the
edge and the magnetic field; specifically, on the upper and lower
semicircles of the $j$-th chain ring, $\psi^U_j$ and $\psi^L_j$, it
acts as
\begin{equation*}
    \D \psi^U_j = (\psi^U_j)' + i A_j \psi^U_j\quad \text{and}\quad \D \psi^L_j = (\psi^L_j)' - i A_j \psi^L_j,
\end{equation*}
respectively.

In order to make $-\Delta_{\gamma, \mat A}$ a well-defined
self-adjoint operator we have to specify its domain which entails
choosing the boundary conditions satisfied by the functions at the
vertices of $\Gamma$, in physical terms this means to indicate the
coupling between the rings. We choose for the latter the simplest
nontrivial coupling commonly known as $\delta$. The domain
$\dom(-\Delta_{\gamma, \mat A})$ then consists of all functions from
the Sobolev space $H^2(\Gamma)$ satisfying at the graph vertices the
conditions
\begin{equation}\label{eq:delta_cond_1}
    \psi^U_{j}(0_+) = \psi^L_{j}(0_+) = \psi^U_{j-1}(\pi_-) = \psi^L_{j-1}(\pi_-),
\end{equation}
\begin{equation}\label{eq:delta_cond_2}
    -\D\psi^U_{j-1}(\pi_-) - \D\psi^L_{j-1}(\pi_-) + \D \psi^U_{j}(0_+) + \D\psi^L_{j}(0_+)  = \gamma \psi^U_{j}(0_+)
\end{equation}
for all $j \in \Z$, where $\gamma$ is the coupling constant and
$\psi^U_{j}(0_+)$ is the right limit of $\psi^U_{j}(x)$ at zero and
$\psi^U_j(\pi_-)$ is the left limit of $\psi^U_{j-1}(x)$ at the
point $\pi$, etc. Note the different signs of the quasiderivative
$\D$ at $0_+$ and $\pi_-$ which reflects the fact that the one-sided
derivative at a vertex should be taken in the outgoing direction.

%
%
\section{Duality with a discrete operator}\label{sec:duality_disc_opp}

In order to obtain the spectrum of $-\Delta_{\gamma, \mat A}$ we
employ a particular kind of the duality mentioned in the
introduction, relating it to the difference operator $L_{\mat A}$
which is a bounded self-adjoint operator on $\ell^2(\Z)$ defined by
\begin{equation}\label{eq:def_discrete_op}
    (L_{\mat A} \varphi)_j = 2\cos(A_j \pi)\varphi_{j+1} + 2\cos(A_{j-1} \pi) \varphi_{j-1}.
\end{equation}
We employ the results obtained by K.~Pankrashkin in \cite[Section
2.3]{Pankrashkin2012} using the boundary triple technique, see also
\cite{Bruning2008}.

To begin with, consider the local gauge transform $G$ given by
\[
  \big(G\psi^U_{j}\big)(x) = e^{-i x A_j} \psi^U_{j}(x)\quad \text{and}\quad \big(G\psi^L_{j}\big)(x) = e^{i x A_j} \psi^L_{j}(x).
\]
Using it, the operator $-\Delta_{\gamma, \mat A}$ is unitarily
equivalent to the operator $H_{\gamma, \mat A}$ on $L^2(\Gamma)$
acting as
\[
  H_{\gamma, \mat A}\psi^U_{j} = - (\psi^U_j)'' \quad \text{and}\quad H_{\gamma, \mat A}\psi^L_{j} = - (\psi^L_j)''
\]
with the domain $\dom(H_{\gamma, \mat A})$ consisting of the
functions from $H^2(\Gamma)$ that obey the boundary conditions
\begin{equation}\label{eq:h_delta_cond_1}
    \psi^U_{j}(0_+) = \psi^L_{j}(0_+) = e^{-i \pi A_{j-1}} \psi^U_{j-1}(\pi_-) = e^{i \pi A_{j-1}}\psi^L_{j-1}(\pi_-),
\end{equation}
\begin{equation}\label{eq:h_delta_cond_2}
   \hspace{-5   em} - e^{-i \pi A_{j-1}} (\psi^{U}_{j-1})'(\pi_-) - e^{i \pi A_{j-1}} (\psi^L_{j-1})'(\pi_-) + (\psi^U_{j})'(0_+) + (\psi^L_{j})'(0_+)  = \gamma \psi^U_{j}(0_+).
\end{equation}
Consider next the solutions $s$ and $c$ of the differential equation
$-y'' - z y = 0$ satisfying the boundary conditions
\[
  s(0_+;z) = c'(0_+;z) = 0\quad \text{and}\quad s'(0_+;z) = c(0_+;z) = 1.
\]
They are given explicitly by
\[
  s(x;z) =
    \cases{\frac{\sin(x\sqrt{z})}{\sqrt{z}}& for $z \neq 0$,\\
        x& for $z = 0$,}
  \quad \text{and}\quad
  c(x;z) = \cos(x\sqrt{z}),
\]
where $\sqrt{z}$ stands for the principal branch of the square root.
For $z=k^2$ where $k\in \N$, $s(x;k^2)$ is actually a solution on
$(0,\pi)$ satisfying the Dirichlet boundary conditions. We denote
the set of all such $z$ by $\sigma_D$, i.e. $\sigma_D = \{k^2 |\, k
\in \N\}$. The following proposition shows that all those points are
actually eigenvalues of $H_{\gamma, \mat A}$ and thus also of
$-\Delta_{\gamma, \mat A}$.
\begin{proposition}\label{prop:spectrum_natural}
    Let $k \in \N$ and $j \in \Z$ such that $A_j \in \Z$ or $A_{j-1}, A_{j} \notin \Z$.
    Then there exists an eigenvector $\psi(k^2, j)$ of $H_{\gamma, \mat A}$ corresponding to the eigenvalue $k^2$
    which can be described as follows:
    \begin{enumerate}[a)]
        \item If $A_j \in \Z$ then
        \begin{eqnarray*}
            \psi_l^U(x; k^2,j) =
                \cases{s(x;k^2)& for $l = j$,\\
                    0& for $l \neq j$,}
            \\[1ex]
            \psi_l^{L}(x; k^2,j) =
                \cases{-s(x;k^2)& for $l = j$,\\
                    0& for $l \neq j$}
        \end{eqnarray*}
        for all $l \in \Z$.
        \item If $A_{j-1}, A_{j} \notin \Z$ then
        \begin{eqnarray*}
            \psi_l^U(x; k^2,j) =&
                \cases{\phantom{-}\sin(A_j \pi) \cdot s(x;k^2)& for $l = j - 1$,\\
                    -\sin(A_{j-1} \pi)\cdot e^{i\pi A_j} \cdot s(x-\pi;k^2)& for $l = j$,\\
                    \phantom{-}0& elsewhere,}
            \\[1ex]
            \psi_l^{L}(x; k^2,j) =&
                \cases{-\sin(A_j \pi) \cdot s(x;k^2)& for $l = j - 1$,\\
                    \phantom{-}\sin(A_{j-1} \pi) \cdot e^{-i\pi A_j} \cdot s(x-\pi;k^2)& for $l = j$,\\
                    \phantom{-}0& elsewhere}
        \end{eqnarray*}
        for all $l \in \Z$.
    \end{enumerate}
\end{proposition}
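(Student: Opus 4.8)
The plan is simply to check that the functions $\psi(k^2,j)$ written down in the statement lie in $\dom(H_{\gamma,\mat A})$ and satisfy $H_{\gamma,\mat A}\psi(k^2,j)=k^2\psi(k^2,j)$, and to note that they are not identically zero: in case a) this is clear since $s(\cdot;k^2)\not\equiv 0$, and in case b) it follows from $\sin(\pi A_{j-1})\neq 0$ and $\sin(\pi A_{j})\neq 0$, which is exactly the hypothesis $A_{j-1},A_{j}\notin\Z$.

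First I would record the data we need about the building block. For $z=k^2$ with $k\in\N$ we have $s(x;k^2)=\sin(kx)/k$, which solves $-y''=k^2y$ on $(0,\pi)$ with
\[
  s(0_+;k^2)=s(\pi_-;k^2)=0,\qquad s'(0_+;k^2)=1,\qquad s'(\pi_-;k^2)=(-1)^k,
\]
while the shifted solution $x\mapsto s(x-\pi;k^2)$, which equals $(-1)^ks(x;k^2)$, likewise vanishes at both endpoints of $(0,\pi)$, with derivative $(-1)^k$ at $0_+$ and $1$ at $\pi_-$. From this it is immediate that every component of $\psi(k^2,j)$ solves $-\psi''=k^2\psi$ on its edge (the vanishing components trivially), that $\psi(k^2,j)\in L^2(\Gamma)$ because it is supported on a single ring in case a) and on two adjacent rings in case b), and hence that it lies edgewise in $H^2$. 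Crucially, each component of $\psi(k^2,j)$ vanishes at every vertex of $\Gamma$, so the continuity conditions $(\ref{eq:h_delta_cond_1})$ hold at all vertices with all four quantities equal to zero, and the right-hand side $\gamma\psi^U_j(0_+)$ of $(\ref{eq:h_delta_cond_2})$ is also zero at every vertex.

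It then remains to verify the derivative condition $(\ref{eq:h_delta_cond_2})$ at the finitely many vertices adjacent to the support of $\psi(k^2,j)$; elsewhere both sides vanish. In case a) the only nontrivial vertex is $v_{j+1}$: the incoming derivatives $(\psi^U_j)'(\pi_-)=(-1)^k$ and $(\psi^L_j)'(\pi_-)=-(-1)^k$ enter carrying the phases $e^{-i\pi A_j}$ and $e^{i\pi A_j}$, so the left-hand side collapses to $(-1)^k\big(e^{i\pi A_j}-e^{-i\pi A_j}\big)=2i(-1)^k\sin(\pi A_j)$, which is zero precisely because $A_j\in\Z$; at $v_j$ ring $j-1$ contributes nothing and the two semicircles of ring $j$ give $s'(0_+;k^2)-s'(0_+;k^2)=0$, so $A_{j-1}$ plays no role, consistent with the fact that it is unconstrained in this case. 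In case b) there are three vertices. At $v_{j-1}$ and at $v_{j+1}$ the left-hand side is a cancellation $c-c=0$ between the two semicircles of one ring. At the central vertex $v_j$ one assembles the contribution $\sin(\pi A_j)\big(e^{i\pi A_{j-1}}-e^{-i\pi A_{j-1}}\big)$ from the two incoming edges of ring $j-1$ (dressed by the gauge phases $e^{\mp i\pi A_{j-1}}$ of $(\ref{eq:h_delta_cond_2})$) and the contribution $\sin(\pi A_{j-1})\big(e^{-i\pi A_j}-e^{i\pi A_j}\big)$ from the two outgoing edges of ring $j$, and after factoring out $2i(-1)^k$ the total reads $\sin(\pi A_j)\sin(\pi A_{j-1})-\sin(\pi A_{j-1})\sin(\pi A_j)=0$. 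The specific coefficients $\sin(\pi A_j)$, $-\sin(\pi A_{j-1})e^{i\pi A_j}$, and their companions in the statement are precisely what makes this cancellation work.

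There is no conceptual obstacle here; the only thing demanding care is the bookkeeping of the sign reversal of the outgoing quasi-derivative at $\pi_-$ against that at $0_+$, together with the gauge phases $e^{\pm i\pi A_{j-1}}$ occurring in $(\ref{eq:h_delta_cond_1})$ and $(\ref{eq:h_delta_cond_2})$. Accordingly I would write out in full only case b) at the vertex $v_j$, where all these ingredients interact, and leave the remaining, strictly easier, verifications to the reader.
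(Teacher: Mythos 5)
Your proposal is correct and follows the same route as the paper, which simply asserts that the given functions "clearly satisfy" the boundary conditions \eref{eq:h_delta_cond_1} and \eref{eq:h_delta_cond_2}; you carry out that direct verification explicitly, and your bookkeeping of the values $s'(0_+;k^2)=1$, $s'(\pi_-;k^2)=(-1)^k$, the gauge phases $e^{\pm i\pi A_{j-1}}$, and the cancellations at each vertex (in particular the identity $2i(-1)^k\sin(\pi A_j)=0$ forcing $A_j\in\Z$ in case a), and the pairwise cancellation $\sin(\pi A_j)\sin(\pi A_{j-1})-\sin(\pi A_{j-1})\sin(\pi A_j)=0$ at $v_j$ in case b)) all check out.
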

\begin{proof}
    In both cases the functions $\psi(k^2, j)$ specified above clearly satisfy boundary conditions
    \eref{eq:h_delta_cond_1} and \eref{eq:h_delta_cond_2}.
\end{proof}

Before we state the main spectral equivalence, we have to introduce
another piece of notation. Let $H$ be a self adjoint operator and $B
\subset \R$ a Borel set. By $H_B$ we denote the part of $H$ in $B$
referring to the spectral projection $\1_{B}(H)$ of $H$ to the set
$B$, in other words, $H_B = H \1_{B}(H)$.
\begin{theorem}\label{th:dual_unitary_rel}
  For any interval $J \subset \R\setminus \sigma_D$, the operator $(H_{\gamma, \mat A})_J$ is unitarily equivalent
  to the preimage
  $\eta^{(-1)}\big((L_{\mat A})_{\eta(J)}\big)$, where
  \[
    \eta(z) = \gamma s(\pi; z) + 2 c(\pi; z) + 2s'(\pi; z).
  \]
\end{theorem}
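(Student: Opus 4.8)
The plan is to realise $H_{\gamma,\mat A}$ as a self-adjoint restriction of a larger operator by means of a boundary triple whose boundary space is $\ell^2(\Z)$, to compute the associated Weyl function explicitly, to observe that the $\delta$-coupling condition is nothing but the spectral equation $L_{\mat A}\varphi=\eta(z)\varphi$, and then to invoke the abstract unitary-equivalence result of \cite[Section~2.3]{Pankrashkin2012}.

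First I would introduce the operator $T$ acting as $\psi^{U}_{j}\mapsto-(\psi^{U}_{j})''$ and $\psi^{L}_{j}\mapsto-(\psi^{L}_{j})''$ on the set of all $\psi\in H^2(\Gamma)$ that satisfy only the continuity part \eref{eq:h_delta_cond_1} of the vertex conditions, the balance condition \eref{eq:h_delta_cond_2} being dropped. For such $\psi$ put $\varphi_j:=\psi^{U}_{j}(0_+)=\psi^{L}_{j}(0_+)$, the common value at $v_j$, and define $\Gamma_0\psi:=\{\varphi_j\}_{j\in\Z}$ together with $\Gamma_1\psi:=\{(\Gamma_1\psi)_j\}_{j\in\Z}$, where $(\Gamma_1\psi)_j$ is the left-hand side of \eref{eq:h_delta_cond_2}, i.e.\ the sum of the gauge-corrected outgoing quasiderivatives at $v_j$. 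One checks that $(\ell^2(\Z),\Gamma_0,\Gamma_1)$ is a boundary triple for $T$ of the kind treated in \cite{Pankrashkin2012}, that $H_{\gamma,\mat A}$ is precisely the restriction of $T$ to $\ker(\Gamma_1-\gamma\Gamma_0)$, and that the Dirichlet operator, the restriction of $T$ to $\ker\Gamma_0$, is the orthogonal sum over all semicircles of the negative second derivative with Dirichlet conditions, so that its spectrum equals $\sigma_D$.

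Next I would compute the Weyl function $M(z)$. Fix $z\notin\sigma_D$, so that $s(\pi;z)\neq0$; then for each $\varphi\in\ell^2(\Z)$ there is a unique $\psi\in\ker(T-z)$ with $\Gamma_0\psi=\varphi$, obtained by solving $-y''=zy$ on each semicircle with endpoint values dictated by \eref{eq:h_delta_cond_1}. Writing such a solution on $(0,\pi)$ as $y(x)=a\,c(x;z)+\beta\,s(x;z)$ and using the Wronskian identity $c(x;z)s'(x;z)-c'(x;z)s(x;z)=1$, one gets for an edge carrying the endpoint data $y(0_+)=a$, $y(\pi_-)=b$ that $y'(0_+)=\bigl(b-a\,c(\pi;z)\bigr)/s(\pi;z)$ and $-y'(\pi_-)=\bigl(a-b\,s'(\pi;z)\bigr)/s(\pi;z)$. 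Substituting $a=\varphi_j$ and $b=e^{i\pi A_j}\varphi_{j+1}$, $b=e^{-i\pi A_j}\varphi_{j+1}$ for $e^{U}_{j}$, $e^{L}_{j}$ and assembling the left-hand side of \eref{eq:h_delta_cond_2}, the phases collapse through $e^{i\pi A_j}+e^{-i\pi A_j}=2\cos(A_j\pi)$ and one arrives at
\[
(\Gamma_1\psi)_j=\frac{1}{s(\pi;z)}\Bigl(2\cos(A_j\pi)\,\varphi_{j+1}+2\cos(A_{j-1}\pi)\,\varphi_{j-1}-\bigl(2c(\pi;z)+2s'(\pi;z)\bigr)\varphi_j\Bigr),
\]
i.e.\ $M(z)=\frac{1}{s(\pi;z)}\bigl(L_{\mat A}-(2c(\pi;z)+2s'(\pi;z))\bigr)$ --- a bounded self-adjoint operator for real $z\notin\sigma_D$ which, up to the scalar factor $1/s(\pi;z)$, is an affine function of the single $z$-independent operator $L_{\mat A}$. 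Hence, on $\ker(T-z)$ the condition $\Gamma_1\psi=\gamma\Gamma_0\psi$ reads $M(z)\varphi=\gamma\varphi$, and since $M(z)-\gamma=\frac{1}{s(\pi;z)}\bigl(L_{\mat A}-\eta(z)\bigr)$ with $\eta(z)=\gamma s(\pi;z)+2c(\pi;z)+2s'(\pi;z)$, this is exactly $L_{\mat A}\varphi=\eta(z)\varphi$.

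With the Weyl function in this normal form the unitary-equivalence theorem of \cite[Section~2.3]{Pankrashkin2012} applies: since the coupling operator is the scalar $\gamma$ and $M(z)-\gamma$ is a scalar multiple of $L_{\mat A}-\eta(z)$, the part of $H_{\gamma,\mat A}$ over any interval $J$ disjoint from $\sigma_D$ is unitarily equivalent to $\eta^{(-1)}\bigl((L_{\mat A})_{\eta(J)}\bigr)$. The hypothesis $J\cap\sigma_D=\emptyset$ is precisely what keeps $s(\pi;z)\neq0$ throughout $J$, so that $M(z)$ is everywhere defined there and $H_{\gamma,\mat A}$ has no Dirichlet component over $J$; by additivity of spectral parts over disjoint Borel sets it then suffices to treat a countable cover of $J$ by subintervals on which $\eta$ is a homeomorphism onto its image, the finitely many critical points of $\eta$ in any band $(k^2,(k+1)^2)$ being negligible. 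I expect the genuine difficulty --- the algebra above being routine bookkeeping with the gauge phases --- to lie in verifying the hypotheses of the abstract result in this infinite-graph situation, namely surjectivity of $(\Gamma_0,\Gamma_1)$ onto $\ell^2(\Z)\times\ell^2(\Z)$, the Green identity $\scal{T\psi,\phi}-\scal{\psi,T\phi}=\scal{\Gamma_1\psi,\Gamma_0\phi}-\scal{\Gamma_0\psi,\Gamma_1\phi}$, the identification of the Dirichlet spectrum with $\sigma_D$, and in pinning down the reading of $\eta^{(-1)}$ on subintervals where $\eta$ is monotone.
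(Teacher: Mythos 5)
Your proposal is correct and takes essentially the same route as the paper: the published proof simply cites Pankrashkin's Theorem 18, which packages exactly the boundary-triple/Weyl-function computation you carry out explicitly — your identity $M(z)-\gamma=\frac{1}{s(\pi;z)}\bigl(L_{\mat A}-\eta(z)\bigr)$ is precisely the content of that theorem, up to the overall factor of four the authors mention. Your write-up is just a self-contained unpacking of the cited result, and the gauge-phase bookkeeping and the Wronskian computation in it check out.
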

\begin{proof}

The claim follows from Theorem 18 in \cite{Pankrashkin2012} where
both $\eta$ and the difference operator $L_{\mat A}$ are divided by
four.
\end{proof}

\noindent Using the expressions for $s(x;z)$ and $c(x;z)$ we can write
\begin{equation}\label{eq:def_eta}
  \eta(z) =
    \cases{\gamma\, \frac{\sin\big(\pi\sqrt{z}\big)}{\sqrt{z}} + 4 \cos\big(\pi\sqrt{z}\big)& for $z \neq 0$,\\
        \gamma \pi + 4& for $z = 0$.}
\end{equation}
Thus, owing to the fact that $H_{\gamma, \mat A}$ is unitarily
equivalent to $-\Delta_{\gamma, \mat A}$, the spectrum of
$-\Delta_{\gamma, \mat A}$ is related to the spectrum of $L_{\mat
A}$ via the preimage of $\sigma(L_{\mat A})$ under the entire
function $\eta$. This means that, up to the discrete set $\{n^2|\,n
\in \N\}$ of infinitely degenerate eigenvalues of $-\Delta_{\gamma,
\mat A}$ which are described in Proposition
\ref{prop:spectrum_natural}, one has $\lambda \in \sigma(L_{\mat
A})$ if and only if $\eta^{(-1)}\big(\{\lambda\}\big) = \{z\in\R |\,
\eta(z) = \lambda\} \subset \sigma(-\Delta_{\gamma, \mat A})$.
Moreover, $\lambda \in \sigma(L_{\mat A})$ is an eigenvalue if and
only if points from $\eta^{(-1)}\big(\{\lambda\}\big)$ are
eigenvalues and the same holds for the other parts of the spectrum,
e.g. for the essential, absolutely continuous, and singular
continuous spectral component.
\begin{figure}
    \begin{center}
        \includegraphics{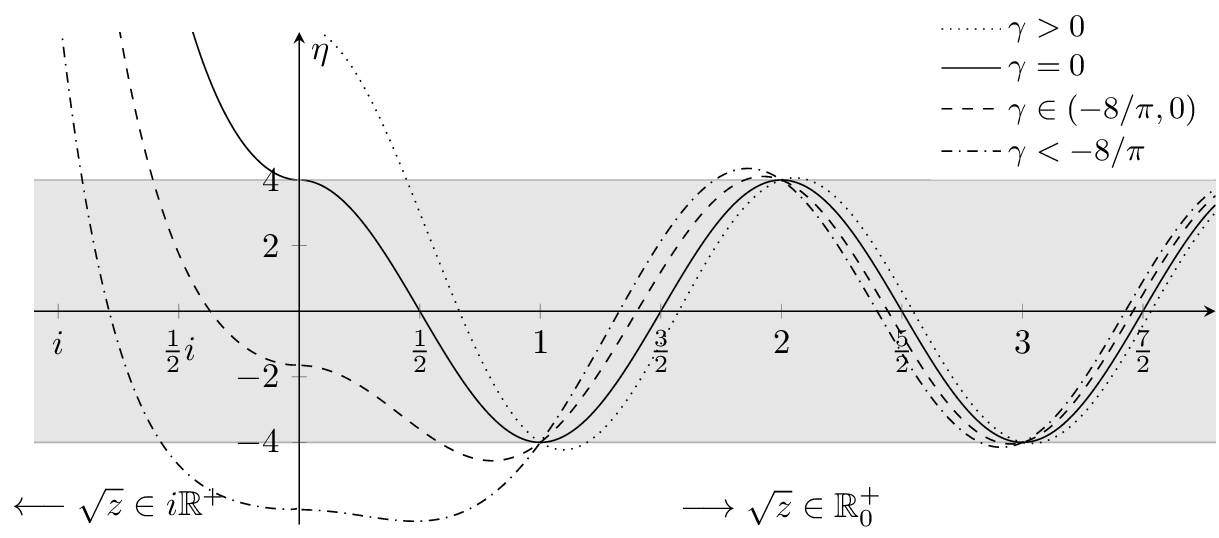}
    \end{center}
    \caption{The influence of the parameter $\gamma$ on the behaviour of
    $\eta(z)$ which is plotted as a function of $\sqrt{z}$. On the right side of the vertical axis we plot the positive increasing
    values of $\sqrt{z}$ and on the left side we plot increasing values of the purely imaginary positive values of $\sqrt{z}$,
    i.e. of $\sqrt{z} = i\kappa$, $\kappa > 0$.}
    \label{fig:eta_fun}
\end{figure}

Clearly, $\norm{L_{\mat A}} \leq 4$, where $\norm{\cdot}$ is the
operator norm on $\ell^2(\Z)$, and therefore $\sigma(L_{\mat A})
\subset [-4,4]$. We are thus interested in the behaviour of $\eta$
when its values are inside the interval $[-4,4]$. This is shown in
Figure \ref{fig:eta_fun}. The function $\eta$ is continuous with the
continuous derivative bounded in each interval $(x,\infty), x\in
\R$, and it behaves essentially in the same way in each of the
intervals $\big[n^2, (n+1)^2\big], n \in \N$. Let $I_n :=
\eta^{(-1)}\big([-4,4]\big) \cap \big(n^2, (n+1)^2\big)$ be the
preimage of $[-4,4]$ restricted to $\big(n^2,(n+1)^2\big)$. By
inspecting the derivative of $\eta$, it is easy to check that $I_n$
is always an interval. Moreover, for $\gamma > 0$ we have $I_n  =
\big[a_n, (n+1)^2\big)$ where $n^2 < a_n < (n+1)^2$. On the other
hand, for $\gamma < 0$ we have $I_n = (n^2, b_n]$ where $n^2 < b_n <
(n+1)^2$. Finally, $I_n = \big(n^2, (n+1)^2\big)$ holds for $\gamma
= 0$. Thus whenever $\gamma \neq 0$, the intervals $I_n$ are
separated by a positive distance, i.e. there are gaps between parts
of the spectrum.

For $z < 1$, the behaviour of $\eta(z)$ is slightly different and
much stronger influenced by the value of $\gamma$. If $\gamma > 0$
then $\eta$ is decreasing on $(-\infty, 1)$. If $-12/\pi \leq \gamma
\leq 0$ then it is decreasing up to a certain point in $(0,1)$ and
then increasing. Finally, if $\gamma < -12/\pi$ then $\eta$ is
decreasing up to some point in $(-\infty,0)$ and then increasing.
Let $I_0 := \eta^{(-1)}\big([-4,4]\big) \cap \big(-\infty, 1\big)$.
It clearly follows that $I_0$ is again an interval. Since $\lim_{z
\to -\infty} \eta(z) = \infty$, we obtain that for $\gamma > 0$,
$I_0  = [a_0, 1)$ where $0< a_0 <1$. For $\gamma = 0$ we have $I_0 =
[0,1)$. For $-8/\pi < \gamma < 0$, $I_0 = [a_0, b_0]$ where $a_0 < 0
< b_0 < 1$. For $\gamma = -8/\pi$, $I_0 = [a_0, 0]$ where $a_0 < 0$,
and finally, for $\gamma < -8/\pi$, $I_0 = [a_0, b_0]$ where $a_0 <
b_0 < 0$. Note that $0 \in I_0$ holds only when $\gamma \in [-8/\pi,
0]$. These findings combined with Proposition
\ref{prop:spectrum_natural} and Theorem \ref{th:dual_unitary_rel}
yield the following statement about the basic structure of the
spectrum of $-\Delta_{\gamma, \mat A}$.

\begin{proposition}\label{prop:spec:basic_for_original_op}
The spectrum of $-\Delta_{\gamma, \mat A}$ is bounded from below and
can be decomposed into the discrete set
    $\sigma_{D} = \{n^2|\,n \in \N\}$ of infinitely degenerate eigenvalues and the part $\sigma_{L_{\mat A}}$
    determined by the spectrum of $L_{\mat A}$,
    $\sigma(-\Delta_{\gamma, \mat A}) = \sigma_{p} \cup \sigma_{L_{\mat A}}$, where $\sigma_{L_{\mat A}}$ can be written as the union
    \[
        \sigma_{L_{\mat A}} = \bigcup_{n=0}^{\infty} \sigma_n
    \]
    with $\sigma_n = \eta^{(-1)}\big(\sigma(L_{\mat A})\big) \cap I_n$ for $n \geq 0$, $I_n = \eta^{(-1)}\big([-4,4]\big) \cap \big(n^2, (n+1)^2\big)$
    for $n > 0$, and $I_0 = \eta^{(-1)}\big([-4,4]\big) \cap \big(-\infty, 1\big)$.

    When $\gamma \neq 0$, the spectrum has always gaps between the $\sigma_n$'s. For $\gamma > 0$, the spectrum is positive.
    For $\gamma < -8\pi$, the spectrum has a negative part and does not contain zero. Finally, $0 \in \sigma(-\Delta_{\gamma, \mat A})$ if and only if
    $\gamma \pi + 4 \in \sigma(L_{\mat A})$.
\end{proposition}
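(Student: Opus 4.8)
The plan is to prove the proposition by assembling the two structural ingredients already at hand — the infinitely degenerate Dirichlet eigenvalues of Proposition~\ref{prop:spectrum_natural} and the spectral correspondence of Theorem~\ref{th:dual_unitary_rel} — and then reading the finer claims off the behaviour of $\eta$ described above. Since $-\Delta_{\gamma, \mat A}$ and $H_{\gamma, \mat A}$ are unitarily equivalent, it suffices to treat $\sigma(H_{\gamma, \mat A})$. First I would cover $\R\setminus\sigma_D$ by the pairwise disjoint open intervals $J_0=(-\infty,1)$ and $J_n=\big(n^2,(n+1)^2\big)$, $n\ge1$. On each $J_n$, Theorem~\ref{th:dual_unitary_rel} identifies $(H_{\gamma, \mat A})_{J_n}$ up to unitary equivalence with $\eta^{(-1)}\big((L_{\mat A})_{\eta(J_n)}\big)$, so a point $z\in J_n$ lies in $\sigma(H_{\gamma, \mat A})$ exactly when $\eta(z)\in\sigma(L_{\mat A})$. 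Since $\norm{L_{\mat A}}\le4$ forces $\sigma(L_{\mat A})\subset[-4,4]$, the set of such $z$ equals $\eta^{(-1)}\big(\sigma(L_{\mat A})\big)\cap\eta^{(-1)}\big([-4,4]\big)\cap J_n=\eta^{(-1)}\big(\sigma(L_{\mat A})\big)\cap I_n=\sigma_n$ by the definition of $I_n$; hence $\sigma(H_{\gamma, \mat A})\cap\big(\R\setminus\sigma_D\big)=\bigcup_{n\ge0}\sigma_n$.

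Next I would reinstate the points of $\sigma_D$. For every $k\in\N$ there is at least one admissible index $j$ in Proposition~\ref{prop:spectrum_natural} — if some $A_j\in\Z$ use that $j$ via case a), otherwise every $j$ is admissible via case b) — and there are infinitely many such $j$ (either $A_j\in\Z$ for infinitely many $j$, or else $A_j\notin\Z$ for all large $\abs{j}$, in which case those $j$ work through case b)). The corresponding eigenvectors $\psi(k^2,j)$ are supported on at most two consecutive rings, so those attached to indices at mutual distance $\ge2$ are linearly independent; thus $k^2$ is an eigenvalue of $H_{\gamma, \mat A}$ of infinite multiplicity and $\sigma_D\subset\sigma(H_{\gamma, \mat A})$. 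As $\sigma_n\subset I_n\subset J_n$ and the $J_n$ avoid $\sigma_D$, the union $\sigma_D\cup\bigcup_n\sigma_n$ is disjoint; and it is closed, because $\eta^{(-1)}(\sigma(L_{\mat A}))$ is closed so that $\overline{\sigma_n}\cap J_n=\sigma_n$, whence $\overline{\sigma_n}\setminus\sigma_n\subset\overline{I_n}\setminus I_n\subset\{m^2:m\ge1\}=\sigma_D$, while $\sigma_n\subset[n^2,(n+1)^2]$ with $n^2\to\infty$ prevents any further accumulation. Since the spectrum is closed this gives $\sigma(-\Delta_{\gamma, \mat A})=\sigma_D\cup\bigcup_{n\ge0}\sigma_n$ with $\sigma_{L_{\mat A}}=\bigcup_{n\ge0}\sigma_n$.

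The remaining assertions would then follow from the properties of $\eta$ recorded before the proposition. The spectrum is bounded below because $\inf I_0=a_0>-\infty$ for every $\gamma$, whereas every other $\sigma_n$ lies in $[n^2,(n+1)^2]$ and $\sigma_D\subset[1,\infty)$, so $\sigma(-\Delta_{\gamma, \mat A})\subset[a_0,\infty)$. For $\gamma\ne0$ the intervals $I_n$ are separated by a positive distance (indeed $\overline{I_n}\cap\overline{I_{n+1}}=\emptyset$), and the interval between them — $\big((n+1)^2,a_{n+1}\big)$ for $\gamma>0$, or $\big(b_n,(n+1)^2\big)$ for $\gamma<0$ — is sent by $\eta$ outside $[-4,4]$ and hence lies in the resolvent set, producing the gaps between the $\sigma_n$. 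For $\gamma>0$ one has $0<a_0$, so $\sigma(-\Delta_{\gamma, \mat A})\subset(0,\infty)$. For $\gamma<-8/\pi$ we have $\eta(0)=\gamma\pi+4<-4$ by \eref{eq:def_eta}, so the minimum of $\eta$ on $(-\infty,1)$ falls below $-4$; then $I_0=[a_0,b_0]\subset(-\infty,0)$ with $\eta$ monotone on it, $\eta(a_0)=4$, $\eta(b_0)=-4$, hence $\eta(I_0)=[-4,4]$, and since $\sigma(L_{\mat A})\ne\emptyset$ this forces $\sigma_0\ne\emptyset$, a genuine negative component; moreover $0\notin I_0$, $0\notin I_n$ for $n\ge1$, and $0\notin\sigma_D$, so $0\notin\sigma(-\Delta_{\gamma, \mat A})$.

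Finally, the last equivalence uses $\eta(0)=\gamma\pi+4$: since $0\notin\sigma_D$ and $\sigma_n\subset(0,\infty)$ for $n\ge1$, one has $0\in\sigma(-\Delta_{\gamma, \mat A})$ iff $0\in\sigma_0$, i.e. iff $0\in I_0$ and $\eta(0)\in\sigma(L_{\mat A})$; if $\gamma\pi+4\in[-4,4]$ then $0\in\eta^{(-1)}([-4,4])\cap(-\infty,1)=I_0$ and the requirement collapses to $\gamma\pi+4\in\sigma(L_{\mat A})$, whereas if $\gamma\pi+4\notin[-4,4]$ then $0\notin I_0$ and $\gamma\pi+4\notin[-4,4]\supset\sigma(L_{\mat A})$, so both sides fail. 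I expect no real obstacle, every ingredient being already in place; the one step demanding care is the patching in the second paragraph — checking that gluing the spectral pieces over the open cover of $\R\setminus\sigma_D$ yields a closed set whose only extra limit points lie in $\sigma_D$, so that reinstating the Dirichlet eigenvalues neither drops nor double counts spectrum.
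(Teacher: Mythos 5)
Your proposal is correct and follows exactly the route the paper intends: the paper offers no separate proof of this proposition, simply stating that it results from combining the discussion of $\eta$ and the intervals $I_n$ with Proposition~\ref{prop:spectrum_natural} and Theorem~\ref{th:dual_unitary_rel}, which is precisely the assembly you carry out (including the correct reading of the paper's "$\gamma<-8\pi$" as $\gamma<-8/\pi$). Your write-up merely makes explicit some details the paper leaves implicit, such as the infinite degeneracy of the $n^2$ eigenvalues via disjointly supported eigenvectors and the case analysis for the final equivalence at zero.
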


%
%
\section{Spectrum in the general case}\label{sec:spectrum_general}

The main conclusion from the previous discussion is that in order to get a
full picture of the spectrum of $-\Delta_{\gamma, \mat A}$ we need
to investigate the spectrum of the bounded self-adjoint Jacobi
operator $L_{\mat A}$. Spectral analysis of Jacobi
operators is a well understood topic, see e.g. \cite{Teschl2000},
and we can pick the tools suitable for our present case.

Denoting $a_j := 2 \cos(A_j \pi)$ we can
express the action of $L_{\mat A}$ as
\[
    (L_{\mat A} \varphi)_j = a_j\varphi_{j+1} + a_{j-1}\varphi_{j-1}
\]
for any $\varphi \in \ell^2(\Z)$.
First thing to mention is that the spectrum of $L_{\mat A}$ does not
depend on the signs of $a_j$. This follows from the fact that
$L_{\mat A}$ is unitarily equivalent to $L_{\tilde{\mat A}}$
whenever $\abs{a_j} = \abs{\tilde a_j}$. It can be easily checked
that the equivalence is mediated by the unitary operator $U_{\mat A,
\tilde{\mat A}}$, i.e. $L_{\tilde{\mat A}} = U_{\mat A, \tilde{\mat
A}} L_{\mat A} U_{\mat A, \tilde{\mat A}}^{-1}$, defined by
\[
    \big(U_{\mat A, \tilde{\mat A}}\varphi \big)_j = u_j \varphi_j,
\]
for any $\varphi \in \ell^2(\Z)$, where
\[
    u_j =
    \cases{1& for $j = 0$,\\
        s_j s_{j-1} \ldots s_2 s_{1}& for $j > 0$,\\
        s_j s_{j+1} \ldots s_{-2} s_{-1}& for $j < 0$,}
    \quad\text{and} \quad
    s_j =
    \cases{1 & for $\tilde a_j = a_j$,\\
        -1 & otherwise.}
\]
This unitary invariance can be used to find upper and lower
bounds of the spectrum. By simple manipulations we get
\[
    \scal{\varphi,L_{\mat A} \varphi} = - \sum_{j \in \Z} a_j \abs{\varphi_{j+1} - \varphi_j}^2
    + \sum_{j \in \Z} (a_{j-1} + a_j) \abs{\varphi_j}^2.
\]
Let $\mat A^+$ be such that $a_j^+ = \abs{a_j}$, then we have
\begin{eqnarray*}
    \scal{\varphi,L_{\mat A} \varphi}
    &= \scal{U_{\mat A, \tilde{\mat A}}\varphi,L_{\mat A^+} U_{\mat A, \tilde{\mat A}}\varphi}
    \leq \sum_{j \in \Z} \big(\abs{a_{j-1}} + \abs{a_j}\big) \abs{\big(U_{\mat A, \tilde{\mat A}}\varphi \big)_j}^2\\
    &\leq \sup_{j\in\Z} c_j \norm{\varphi},
\end{eqnarray*}
where
\[
    c_j = \abs{a_{j-1}} + \abs{a_j} = 2\big(\abs{\cos(A_{j-1} \pi)} + \abs{\cos(A_{j}\pi)}\big).
\]
Similarly, using $\mat A^-$ such that $a_j^- = -\abs{a_j}$, we get
\[
    \scal{\varphi,L_{\mat A} \varphi}
    = \scal{U_{\mat A, \tilde{\mat A}}\varphi,L_{\mat A^-} U_{\mat A, \tilde{\mat A}}\varphi}
    \geq -\sup_{j\in\Z} c_j \norm{\varphi},
\]
which implies for the spectrum
\begin{equation}\label{eq:spec_LA_bound}
    \sigma(L_{\mat A}) \subset [-\sup_{j\in\Z}c_j, \sup_{j\in\Z}c_j].
\end{equation}
\begin{remark}
    It follows from the previous bounds that if $\sup_{j\in\Z}c_j < 4$,
    which means that all the pairs $A_{j-1}, A_j$ are uniformly separated from pairs of integers,
    the gaps between the parts $\sigma_n$ of the spectrum of $-\Delta_{\gamma, \mat A}$
    from Proposition \ref{prop:spec:basic_for_original_op} are always open and contain exactly one eigenvalue each.
\end{remark}

Let us turn to the situation, when some $a_j$'s are equal to zero, which
happens if the sequence $\{A_j\}$ contains half-integers. First
we introduce some notation, putting
\begin{eqnarray*}
    J_0 &:= \{j \in \Z \mid A_j + 1/2 \in \Z\},\\
    J &:= J_0 \cup \big(\{-\infty\} \setminus \inf J_0\big) \cup \big(\{\infty\} \setminus \sup J_0\big),
\end{eqnarray*}
i.e. $J$ contains $\infty$ whenever $J_0$ is bounded from above and
$-\infty$ whenever $J_0$ is bounded from below. We say that $j, k
\in J$ are \emph{neighbouring} in $J$ if $j < k$ and there is no
$i\in J$ such that $j < i < k$. For any $j, k \in J$ neighbouring in
$J$ let $L_{j,k}$ be the restriction of $L_{\mat A}$ to
$\{j+1,\ldots, k\}$. Clearly, $L_{j,k}$ is an operator on
$\ell^2(\{j+1,\ldots, k\})$ given by
\[
    (L_{j, k} \varphi)_i =
    \cases{a_{j+1} \varphi_{j+2} & for $i = j + 1$,\\
        a_j\varphi_{j+1} + a_{j-1}\varphi_{j-1} & for $j + 1 < i < k$,\\
        a_{k-1}\varphi_k& for $i = k$,}
\]
where $a_i \neq 0$ for all $j < i < k$. This allows us to write the
decomposition
\[
    L_{\mat A} = \bigoplus_{j,k \in J,\atop\text{neighbouring\, in}\, J} L_{j,k}.
\]
When $a_j \neq 0$ for all $j \in \Z$, then $J_0 = \emptyset$, $J =
\{-\infty,\infty\}$ and hence $L_{\mat A} = L_{-\infty,\infty}$.

\begin{theorem}\label{th:spec:periodic_degenerate}
    Under the previous notation
    \[
      \sigma(L_{\mat A}) = \overline{\bigcup_{j,k \in J,\atop\text{neighbouring\, in}\, J} \sigma(L_{j,k})}
    \]
    and the essential spectrum of $L_{\mat A}$ is nonempty.
    If $j,k \in J_0$ then $L_{j,k}$ has a pure point spectrum containing $k - j$ different eigenvalues.
    If $j = -\infty$ or $k = \infty$ then the spectrum  of $L_{j,k}$ has multiplicity at most two,
    that of the singular spectrum being one, and a nonempty essential part.
\end{theorem}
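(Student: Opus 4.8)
\emph{Proof sketch / plan.} The plan is to read everything off the orthogonal decomposition $L_{\mat A} = \bigoplus_{j,k} L_{j,k}$ established above, treating the finite and the infinite blocks separately. For the identity $\sigma(L_{\mat A}) = \overline{\bigcup \sigma(L_{j,k})}$ I would use the elementary fact that for an orthogonal sum $T = \bigoplus_n T_n$ of uniformly bounded self-adjoint operators one has $\sigma(T) = \overline{\bigcup_n \sigma(T_n)}$: the inclusion $\bigcup_n \sigma(T_n) \subseteq \sigma(T)$ together with closedness of $\sigma(T)$ gives $\supseteq$, and conversely, if $\lambda$ lies at positive distance from every $\sigma(T_n)$ then the resolvents $(T_n - \lambda)^{-1}$ are uniformly bounded and assemble into a bounded inverse of $T - \lambda$. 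The same reasoning applied to Weyl sequences supported in a single summand gives $\sigma_{\mathrm{ess}}(T_n) \subseteq \sigma_{\mathrm{ess}}(T)$, which I will use twice below.

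For a finite block, $j,k \in J_0$, the operator $L_{j,k}$ is a real symmetric tridiagonal $(k-j)\times(k-j)$ matrix, so its spectrum is pure point; the content is that all $k-j$ eigenvalues are distinct. This is the classical simplicity argument for Jacobi matrices with nonvanishing off-diagonal entries, and here $a_i = 2\cos(A_i\pi) \neq 0$ for all $j<i<k$ precisely because no index of $J_0$ lies strictly between the neighbouring indices $j$ and $k$. Concretely, an eigenvector of $L_{j,k}$ satisfies a second-order recurrence which, together with the boundary relation at $i=j+1$, determines all later components from $\varphi_{j+1}$ alone, so each eigenspace is at most one-dimensional; comparing with the dimension $k-j$ forces $k-j$ different eigenvalues.

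Now let $L_{j,k}$ be an infinite block, i.e.\ $j=-\infty$ or $k=\infty$; its off-diagonal entries $a_i$ with $j<i<k$ are again all nonzero. Since $L_{j,k}$ is an infinite-dimensional bounded self-adjoint operator, its essential spectrum is nonempty — otherwise $\sigma(L_{j,k})$ would be a compact set consisting of isolated eigenvalues of finite multiplicity, hence finite, contradicting infinite-dimensionality. For the multiplicity I would produce small cyclic families: if exactly one end is infinite, the boundary basis vector ($\delta_{j+1}$ when $j$ is finite, $\delta_k$ when $k$ is finite) is cyclic, because repeatedly applying $L_{j,k}$ reaches every $\delta_m$ in the block thanks to the nonvanishing $a_i$, so the spectrum is simple; if both ends are infinite (the whole-line case $J_0=\emptyset$, $L_{\mat A}=L_{-\infty,\infty}$), the pair $\{\delta_0,\delta_1\}$ is generating by the same reachability computation, so the spectral multiplicity is at most two. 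The remaining claim, that the singular part of a whole-line Jacobi operator has multiplicity one, is the one genuinely spectral-theoretic input; I would quote it from the general theory of Jacobi operators \cite{Teschl2000} — it reflects the presence of a subordinate solution at a.e.\ $\lambda$ in the singular spectrum, so no genuine two-dimensionality survives there. This is the step I expect to be the main obstacle; everything else is finite-rank bookkeeping, and the only alternative to citing it would be to redo the Weyl $m$-function / subordinacy analysis for $L_{j,k}$ directly.

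Finally, for the nonemptiness of $\sigma_{\mathrm{ess}}(L_{\mat A})$: if some block is infinite, we are done by the previous paragraph together with $\sigma_{\mathrm{ess}}(L_{j,k})\subseteq\sigma_{\mathrm{ess}}(L_{\mat A})$. If every block is finite, then $J_0$ must be unbounded both above and below, so there are infinitely many blocks; choosing one normalized eigenvector $\psi_n$ from each of infinitely many distinct blocks yields an orthonormal sequence with $L_{\mat A}\psi_n=\lambda_n\psi_n$ and $\lambda_n\in[-4,4]$, and any convergent subsequence $\lambda_{n_i}\to\lambda$ turns $\{\psi_{n_i}\}$ into a Weyl sequence for $\lambda$, whence $\lambda\in\sigma_{\mathrm{ess}}(L_{\mat A})$.
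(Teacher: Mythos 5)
Your proposal is correct and follows essentially the same route as the paper: the orthogonal-sum spectral identity, the classical simplicity of symmetric tridiagonal matrices with nonvanishing off-diagonal entries for the finite blocks, and a citation of the standard Jacobi-operator multiplicity results (Theorem 3.4 and Lemma 3.6 in \cite{Teschl2000}) for the infinite blocks, with nonemptiness of the essential spectrum coming from boundedness and infinite dimensionality. The paper's proof is just a terser version of yours, leaving the direct-sum and Weyl-sequence bookkeeping implicit.
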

\begin{proof}
    The nonemptiness of the essential spectrum follows from boundedness of $L_{\mat A}$.
    When $j,k \in J_0$ the operator $L_{j,k}$ corresponds to a symmetric tridiagonal matrix $(k - j) \times (k-j)$
    with nonzero upper and lower diagonals which implies that it has $k-j$ different eigenvalues.
    When $j = -\infty$ or $k = \infty$ then the assertion follows from Theorem 3.4, Lemma 3.6
    in \cite{Teschl2000}, and the boundedness of $L_{j,k}$.
\end{proof}
Note that the absolutely continuous spectrum of $L_{\mat A}$, which
can be present only when $J_0$ is bounded from at least one side,
can be further determined by the principle of subordinacy, see e.g.
\cite[Section 3.3]{Teschl2000}.

Other interesting situation is the periodic one when there exists $N
\in \N$ such that $A_j = A_{j+N}$ holds for all $j \in \Z$ or more
generally, in view of the invariance of the spectrum w.r.t. the
signs of $a_j$, when $\abs{a_j} = \abs{a_{j+N}}$ holds for all $j
\in \Z$. If $a_j = 0$, or equivalently $A_j + 1/2 \in \Z$ for some
$j$, then the previous theorem implies that the spectrum is
trivially given by a finite number of eigenvalues with infinite
multiplicities. Otherwise, when $a_j \neq 0$ for all $j\in \Z$ one
may apply Floquet-Bloch theory to show that the spectrum is purely
absolutely continuous with a band-and-gap structure. The following
assertion summarizes the result proven e.g. in \cite[Sections 7.1
and 7.2]{Teschl2000}.
\begin{theorem}\label{th:spec:periodic_nondegenerate}
    Let $a_j \neq 0$ for all $j\in \Z$ and $\abs{a_j} = \abs{a_{j+N}}$ for some $N \in \N$ and all $j \in \Z$, i.e.
  $A_j + 1/2 \notin \Z$ and $\abs{\cos(A_j \pi)} = \abs{\cos(A_{j+N}\pi)}$,
  where $N$ is the smallest number with such property. Then
    the spectrum of $L_{\mat A}$ is purely absolutely continuous and consists
    of $N$ closed intervals possibly touching at the endpoints.
\end{theorem}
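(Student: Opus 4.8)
The plan is to reduce $L_{\mat A}$ to a standard $N$-periodic Jacobi operator and then invoke Floquet--Bloch theory. First I would use the sign invariance established above: since $\abs{a_j}=\abs{a_{j+N}}$, the operator $L_{\mat A}$ is unitarily equivalent to $L_{\mat A^+}$ with $a^+_j=\abs{a_j}>0$, and this operator is genuinely $N$-periodic, $N$ being the smallest such period. It therefore suffices to describe the spectrum of $L_{\mat A^+}$, a Jacobi operator with vanishing diagonal and strictly positive, $N$-periodic off-diagonal entries, so that no decomposition as in Theorem~\ref{th:spec:periodic_degenerate} occurs.

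Next I would perform the Bloch--Floquet (direct integral) decomposition. Using the discrete Fourier transform adapted to the period one writes $\ell^2(\Z)\cong\int_{[-\pi,\pi)}^{\oplus}\C^N\,\frac{\dd\theta}{2\pi}$ and $L_{\mat A^+}\cong\int_{[-\pi,\pi)}^{\oplus}\mat L(\theta)\,\frac{\dd\theta}{2\pi}$, where $\mat L(\theta)$ is the $N\times N$ Hermitian fibre matrix built from one period cell with the two hopping terms across the cell boundary multiplied by $e^{\pm i\theta}$. Its entries are trigonometric polynomials in $\theta$, hence $\mat L(\theta)$ depends real-analytically on $\theta$, and $\sigma(L_{\mat A^+})=\overline{\bigcup_{\theta}\sigma(\mat L(\theta))}$. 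Ordering the eigenvalues of $\mat L(\theta)$ as $\lambda_1(\theta)\le\cdots\le\lambda_N(\theta)$, each $\lambda_\ell$ is continuous and piecewise real-analytic in $\theta$, so its range is a compact interval $B_\ell$; thus $\sigma(L_{\mat A})=\bigcup_{\ell=1}^N B_\ell$ is a union of $N$ closed bands, which may overlap or touch at their endpoints. The same count follows from the transfer-matrix picture: with $T_j(\lambda)=\left(\begin{array}{cc}\lambda/a_j & -a_{j-1}/a_j\\ 1 & 0\end{array}\right)$ and monodromy $M(\lambda)=T_N(\lambda)\cdots T_1(\lambda)$, periodicity gives $\det M(\lambda)=\prod_{j=1}^N a_{j-1}/a_j=1$, the discriminant $\Delta(\lambda)=\mathrm{tr}\,M(\lambda)$ is a real polynomial of degree \emph{exactly} $N$ with leading coefficient $\big(\prod_{j=1}^N a_j\big)^{-1}\neq 0$, and $\sigma(L_{\mat A^+})=\{\lambda\in\R:\abs{\Delta(\lambda)}\le 2\}$, a union of at most $N$ closed intervals.

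It remains to show that the spectrum is purely absolutely continuous, and this is the step where $a_j\neq 0$ is essential. No eigenvalue branch $\lambda_\ell(\cdot)$ can be constant on a subinterval of the Brillouin zone: a flat band at some $\lambda_0$ would force $\Delta\equiv\pm 2$ on a set with nonempty interior, contradicting $\deg\Delta=N\ge 1$. Hence on the interior of each band $B_\ell$ the map $\theta\mapsto\lambda_\ell(\theta)$ is a non-constant real-analytic function with at most finitely many critical points, so it is a finite-to-one real-analytic map whose level sets are finite; the spectral theorem for the direct integral then represents the part of the spectral measure of $L_{\mat A^+}$ associated with the $\ell$-th fibre eigenvalue as the pushforward of $\frac{\dd\theta}{2\pi}$ under $\lambda_\ell$, which is absolutely continuous with respect to Lebesgue measure on $B_\ell$. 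Summing over $\ell$ gives pure absolute continuity of $\sigma(L_{\mat A})$. These are exactly the facts assembled in \cite[Sections~7.1 and 7.2]{Teschl2000}; the only adaptations required here are the trivial reduction to positive off-diagonals and the observation $\deg\Delta=N$.

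I expect the main obstacle to be precisely the clean exclusion of flat bands, equivalently the absence of point spectrum and of a singular continuous component: once the discriminant is identified as a polynomial of degree exactly $N$, this is immediate, but it is the one place where the hypothesis $a_j\neq 0$ is genuinely used, and it is what separates the present situation from the degenerate periodic case in Theorem~\ref{th:spec:periodic_degenerate}.
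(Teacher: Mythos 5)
Your argument is correct and is essentially the one the paper relies on: the paper offers no proof of its own for this theorem but simply cites the Floquet--Bloch analysis of periodic Jacobi operators in \cite[Sections 7.1 and 7.2]{Teschl2000}, which is exactly the reduction-to-positive-off-diagonals, direct-integral, degree-$N$-discriminant argument you reconstruct. One small quibble: to exclude a flat band it is cleaner to note that a constant branch $\lambda_\ell\equiv\lambda_0$ would force $\Delta(\lambda_0)=2\cos\theta$ for a whole interval of $\theta$, which is already absurd since the left-hand side is a fixed number, rather than arguing via $\Delta\equiv\pm2$; the conclusion is of course the same.
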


%
%
\section{A linear field growth}\label{sec:spectrum_linear}

Suppose now that $A_j = \alpha j + \theta$ holds for some $\alpha,
\theta \in \R$ and every $j\in\Z$. We denote the corresponding
operator $L_{\mat A}$ by $L_{\alpha,\theta}$, i.e.
\[
  (L_{\alpha,\theta} \varphi)_j = 2\cos\big(\pi(\alpha j + \theta)\big)\varphi_{j+1} + 2\cos\big(\pi(\alpha j - \alpha + \theta)\big) \varphi_{j-1}
\]
for all $j \in \Z$. Properties of the spectrum of
$L_{\alpha,\theta}$ are strongly influenced by number theoretic
properties of $\alpha$ and $\theta$. If $\alpha$ is a rational
number, $\alpha = p/q$, where $p$ and $q$ are relatively prime, then
$L_{\alpha,\theta}$ is, according to the discussion in the previous
section, periodic with the period $N = q$. Two distinct situations
may occur depending on the value of $\theta$.
\begin{theorem}\label{th:spec_LA_for_rationals}
  Assume that
  $\alpha = p/q$, where $p$ and $q$ are relatively prime. Then:
  \begin{enumerate}[(a)]
    \item If $\alpha j + \theta + \frac{1}{2} \notin \Z$ for all $j = 0,\ldots, q-1$, then $L_{\alpha,\theta}$ has purely absolutely
      continuous spectrum that consists of $q$ closed intervals possibly touching at the endpoints. In particular, $\sigma(L_{\alpha,\theta}) = \big[-4 \abs{\cos(\pi \theta)},4 \abs{\cos(\pi \theta)}\big]$ holds if $q = 1$.
    \item If $\alpha j + \theta + \frac{1}{2} \in \Z$ for some $j = 0,\ldots, q-1$, then the spectrum of $L_{\alpha,\theta}$  is of pure point type consisting  of
      $q$ distinct eigenvalues of infinite degeneracy. In particular, $\sigma(L_{\alpha,\theta}) = \{0\}$ holds if $q = 1$.
  \end{enumerate}
\end{theorem}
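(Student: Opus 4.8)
The plan is to derive both parts directly from the periodic analysis of Section~\ref{sec:spectrum_general}, i.e.\ from Theorems~\ref{th:spec:periodic_nondegenerate} and~\ref{th:spec:periodic_degenerate}. With $a_j = 2\cos(\pi(\alpha j+\theta))$ as before, I would first record two elementary facts. Since $\alpha q = p$, we have $a_{j+q} = 2\cos(\pi(\alpha j+\theta)+\pi p) = (-1)^p a_j$, so $|a_{j+q}| = |a_j|$ for every $j$; hence $L_{\alpha,\theta}$ is periodic in the sense of Section~\ref{sec:spectrum_general}, with minimal period $N$ necessarily dividing $q$. Secondly, $a_j = 0$ exactly when $\alpha j+\theta+\frac{1}{2}\in\Z$, and because $\alpha(j+q)+\theta+\frac{1}{2} = \alpha j+\theta+\frac{1}{2}+p$, membership of $j$ in the zero set $J_0 = \{j\in\Z : a_j = 0\}$ depends only on $j \mod q$; moreover $j_0,j_1\in J_0$ forces $\alpha(j_1-j_0)\in\Z$ and thus $q\mid(j_1-j_0)$, so $J_0$ is either empty or a single bi-infinite arithmetic progression $j_0+q\Z$.

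For part~(a) I would note that the hypothesis rules out $a_j = 0$ for $j=0,\dots,q-1$, hence for all $j$, so $J_0=\emptyset$ and $L_{\alpha,\theta}$ is precisely the periodic, non-degenerate operator of Theorem~\ref{th:spec:periodic_nondegenerate}; that theorem gives purely absolutely continuous spectrum which is a union of $N$ closed intervals possibly touching at the endpoints, and since $N\mid q$ this set is in particular a union of $q$ such intervals. For $q=1$ one has $\alpha\in\Z$, whence $a_j = 2(-1)^{\alpha j}\cos(\pi\theta)$ and $|a_j|\equiv 2|\cos(\pi\theta)|$, which is nonzero in case~(a); by the sign invariance established in Section~\ref{sec:spectrum_general}, $L_{\alpha,\theta}$ is then unitarily equivalent to $2|\cos(\pi\theta)|\,(S+S^{*})$ with $S$ the bilateral shift on $\ell^2(\Z)$, and from $\sigma(S+S^{*}) = [-2,2]$ I would conclude $\sigma(L_{\alpha,\theta}) = [-4|\cos(\pi\theta)|,\,4|\cos(\pi\theta)|]$.

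For part~(b) the hypothesis makes $J_0\neq\emptyset$, so $J_0=j_0+q\Z$; in the notation preceding Theorem~\ref{th:spec:periodic_degenerate} this means $J=J_0$ and the neighbouring pairs are $j_0+mq,\ j_0+(m+1)q$. Theorem~\ref{th:spec:periodic_degenerate} then identifies each block $L_{j_0+mq,\,j_0+(m+1)q}$ as a $q\times q$ symmetric tridiagonal matrix with nonvanishing off-diagonal entries, hence with $q$ distinct eigenvalues, and says $\sigma(L_{\alpha,\theta})$ is the closure of the union of the block spectra. Because $|a_{i+q}|=|a_i|$ the blocks have off-diagonal entries agreeing in modulus, so the sign invariance makes them mutually unitarily equivalent; the union of the block spectra is therefore already the single $q$-point set $\sigma(L_{j_0,\,j_0+q})$, and since there are infinitely many blocks every one of these eigenvalues has infinite multiplicity and the spectrum is pure point. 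For $q=1$ the hypothesis forces $\theta+\frac{1}{2}\in\Z$, hence every $a_j$ vanishes, $L_{\alpha,\theta}=0$ and $\sigma(L_{\alpha,\theta})=\{0\}$.

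I expect no serious obstacle: the argument is essentially bookkeeping on top of the periodic theory. The two places that merit a line of care are the remark that the minimal period $N$ divides $q$ (so that the conclusion of part~(a) may be phrased with $q$ bands even though the true number could be smaller), and the observation that $J_0$ is a single residue class modulo $q$ — it is this that collapses the \emph{a priori} infinite union of block spectra in part~(b) to exactly $q$ points and forces the infinite degeneracy.
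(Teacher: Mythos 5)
Your proof is correct and follows essentially the same route as the paper: part (a) is read off from Theorem~\ref{th:spec:periodic_nondegenerate} (with the $q=1$ case computed directly), and part (b) from the block decomposition and Theorem~\ref{th:spec:periodic_degenerate}. You are in fact slightly more careful than the paper in two spots — checking that the minimal period $N$ may only divide $q$ and that $J_0$ is a single residue class modulo $q$ so the blocks are mutually unitarily equivalent — but these are refinements of the same argument, not a different one.
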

\begin{proof}
  Part (a) follows directly from Theorem \ref{th:spec:periodic_nondegenerate}.
  For $q = 1$ corresponding to $\alpha \in \Z$ the spectrum may be calculated directly, see e.g. \cite[Section 1.3]{Teschl2000}.

  In case (b) we may without loss of generality assume $\theta + \frac{1}{2} \in \Z$. Thus,
  $a_j = 2 \cos(A_j \pi) = 0$ for $j \mod q = 0$ and $a_j \neq 0$ otherwise. Hence, with the notation from the previous section,
  $J_0 = J = q\Z$ and $L_{j q,(j+1)q}$ are the same for all $j \in \Z$. This together with
  Theorem \ref{th:spec:periodic_degenerate} yields the assertion. If $q = 1$ we have $\alpha \in \Z$ and
  from the assumption $\theta + \frac{1}{2} \in \Z$ it follows that $a_j = 0$ holds for all $j \in \Z$, and consequently, $L_{\alpha,\theta}$ is a null operator.
\end{proof}
\begin{remark}
  Note that (a) occurs, for example, whenever $\theta$ is irrational.
\end{remark}

On the other hand, if $\alpha \notin \Q$ the spectrum of
$L_{\alpha,\theta}$ is closely related to the spectrum of the almost
Mathieu operator $H_{\alpha, \lambda, \theta}$ in the critical
situation, $\lambda = 2$, which for any $\alpha, \lambda, \theta \in
\R$ acts as
\[
    \big(H_{\alpha, \lambda, \theta} \varphi\big)_j
    = \varphi_{j+1} + \varphi_{j-1} + \lambda \cos(2\pi\alpha j + \theta)\varphi_j
\]
for any $\varphi \in \ell^2(\Z)$ and all $j\in\Z$. Recall that the
almost Mathieu operator is one of the most studied discrete
one-dimensional Schr\"odinger operator during several recent
decades, see e.g. \cite{Last2005} for a review. The spectrum of
$H_{\alpha, 2, \theta}$ as a set when $\alpha$ is irrational has
many interesting properties. First of all, it does not depend on
$\theta$, see \cite{Avron1983,Simon82}. Next, it is a Cantor set,
i.e. the perfect nowhere dense set; this property is known as the
``Ten Martini Problem''. The name of the challenge was coined by
Simon \cite{Simon82}, its proof was completed by Avila and
Jitomirskaya in \cite{Avila2009}. Moreover, the Lebesgue measure of
the spectrum of $H_{\alpha, 2, \theta}$ is zero, which is known as
Aubry-Andr\'{e} conjecture on the measure of the spectrum of the
almost Mathieu operator, demonstrated finally by Avila and Krikorian
in \cite{Avila06}. The picture arising from this survey can be
described as follows.
\begin{theorem}\label{th:almost_mathieu_spec_prop}
  For any $\alpha \notin \Q$, the spectrum of $H_{\alpha, 2, \theta}$ does not depend on $\theta$
  and it is a Cantor set of Lebesgue measure zero.
\end{theorem}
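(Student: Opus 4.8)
The statement assembles three deep facts about the critical almost Mathieu operator, and the plan is to establish them one at a time, invoking the literature for the heavy machinery. \emph{$\theta$-independence:} for irrational $\alpha$ the family $\{H_{\alpha,2,\theta}\}_{\theta\in\R}$ is ergodic under the rotation $\theta\mapsto\theta+2\pi\alpha$ of the circle, which is moreover minimal, so its spectrum is not merely almost surely constant but constant for every $\theta$; this is the classical observation of Avron and Simon \cite{Avron1983,Simon82}. \emph{Perfectness:} the absence of isolated points in $\sigma(H_{\alpha,2,\theta})$ is a classical feature of almost periodic operators of this type (see e.g. \cite{Avron1983}); one way to see it is that an isolated point of the spectrum, which by the previous item does not depend on $\theta$, would be an eigenvalue of $H_{\alpha,2,\theta}$ for every $\theta$, hence an atom of the density of states, contradicting the known continuity of the integrated density of states of the almost Mathieu operator. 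Closedness of the spectrum is of course automatic.

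The genuinely hard ingredient is that the spectrum is \emph{nowhere dense}. By the gap-labelling theorem the gaps of $H_{\alpha,2,\theta}$ carry labels in the dense set $(\Z+\alpha\Z)\cap[0,1]$, and what must be shown is that every such label corresponds to an actually open gap. This is precisely the positive resolution of the ``Ten Martini Problem'' by Avila and Jitomirskaya \cite{Avila2009}, whose proof at the critical coupling intertwines Aubry duality, the analytic theory of $\mathrm{SL}(2,\R)$ cocycles over rotations, and delicate continued-fraction estimates separating the Liouville and Diophantine regimes for $\alpha$. I expect this to be the principal obstacle --- it is in fact the theorem on which the present note ultimately rests --- and I would not attempt to reprove it.

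It remains to see that $\sigma(H_{\alpha,2,\theta})$ has \emph{zero Lebesgue measure}, the Aubry--Andr\'{e} conjecture at $\lambda=2$. Here the plan is to split on the arithmetic of $\alpha$: for Diophantine $\alpha$ this is the theorem of Avila and Krikorian \cite{Avila06}, obtained from their reducibility-or-nonuniform-hyperbolicity dichotomy for the associated cocycle together with an estimate on the Lyapunov exponent; for Liouville $\alpha$ it follows from the (semi)continuity of the measure of the spectrum as a function of $\alpha$ and the fact that the spectra of the rational approximants $H_{p_n/q_n,2}$ have total length tending to zero --- the argument of Last, conveniently surveyed in \cite{Last2005}. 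These two cases exhaust the irrational $\alpha$. Putting the four pieces together, $\sigma(H_{\alpha,2,\theta})$ is closed, independent of $\theta$, perfect, nowhere dense, and of Lebesgue measure zero, i.e. a Cantor set of measure zero, which is the claim.
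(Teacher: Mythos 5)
Your proposal is correct and takes essentially the same route as the paper, which states this theorem without a proof environment as a summary of the literature: $\theta$-independence from Avron--Simon, the Cantor property from the Ten Martini Problem of Avila--Jitomirskaya, and zero Lebesgue measure from the resolution of the Aubry--Andr\'{e} measure conjecture. Your additional remarks (the IDS argument for perfectness and the Diophantine/Liouville split crediting Last for the Liouville case) are accurate refinements but not needed beyond what the paper itself invokes.
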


In order to reveal the relation between $L_{\alpha,\theta}$ and
$H_{\alpha, 2, \theta}$ we employ ideas from \cite{Shubin1994}. We
start by introducing the abstract Rotation Algebra $A_\alpha$ which
is a $C^*$ algebra generated by two unitary elements $u, v$ with the
commutation relation
\[
  uv = e^{i 2\pi \alpha} v u,
\]
see also \cite{Effros1967,Pedersen1979,Choi1990,Rieffel1981} for
more details. We can consider the representation $\pi_{\theta}$
generated by operators $U = \pi_{\theta}(u)$ and $V =
\pi_{\theta}(v)$,
\[
  (U \varphi)_j := \varphi_{j+1}, \quad (V \varphi)_j := e^{i 2\pi \alpha j + \theta} \varphi_j.
\]
Then the almost Mathieu operator coincides with the image of the
element
\[
  h_{\alpha} = u + u^{-1} + v + v^{-1} \in A_{\alpha},
\]
in other words, $H_{\alpha, 2, \theta} = \pi_{\theta}(h_{\alpha})$.
On the other hand, one can consider the representation
$\pi'_{\theta}$ generated by operators
\[
  (U \varphi)_j =  e^{i \pi (\alpha j + \theta)} \varphi_{j+1}, \quad (V \varphi)_j = e^{i \pi (\alpha (j-1) + \theta)} \varphi_{j-1}.
\]
In this case we have $L_{\alpha,\theta} =
\pi'_{\theta}(h_{\alpha})$.

When $\alpha \notin \Q$, it can be checked that $A_\alpha$ is
simple, see e.g. \cite{Effros1967,Pedersen1979,Power78}. This
implies that all its representations are faithful and thus they
preserve the spectrum of $h_{\alpha}$, which is defined as a set of
those complex $\lambda$ such that $h_{\alpha} - \lambda I$ is not
invertible, see e.g. \cite{Pedersen1979}. As a result, spectra of
$L_{\alpha,\theta}$ and $H_{\alpha, 2, \theta}$ as sets coincide,
\begin{equation}\label{eq:spec_set_equality_LA_mathieu}
  \sigma(L_{\alpha,\theta}) = \sigma(H_{\alpha, 2, \theta}),
\end{equation}
and are independent of $\theta$. This in combination with Theorem
\ref{th:almost_mathieu_spec_prop} proves the following assertion.
\begin{theorem}\label{th:spec_LA_irrational_cantor}
  For any $\alpha \notin \Q$, the spectrum of $L_{\alpha,\theta}$ as a set does not depend on $\theta$ and it is a Cantor set of Lebesgue measure zero.
\end{theorem}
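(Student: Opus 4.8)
The plan is to exploit the observation, already set up in the discussion preceding the statement, that $L_{\alpha,\theta}$ and the critical almost Mathieu operator $H_{\alpha,2,\theta}$ are the images of one and the same element $h_\alpha = u + u^{-1} + v + v^{-1}$ of the rotation algebra $A_\alpha$ under two different $*$-representations, $\pi'_\theta$ and $\pi_\theta$ respectively, and that for irrational $\alpha$ the algebra $A_\alpha$ is simple. From this one obtains the spectral identity \eref{eq:spec_set_equality_LA_mathieu}, and the theorem then drops out of Theorem \ref{th:almost_mathieu_spec_prop}.

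First I would verify that $\pi'_\theta$ is genuinely a $*$-representation of $A_\alpha$. The operators $U,V$ on $\ell^2(\Z)$ given by $(U\varphi)_j = e^{i\pi(\alpha j+\theta)}\varphi_{j+1}$ and $(V\varphi)_j = e^{i\pi(\alpha(j-1)+\theta)}\varphi_{j-1}$ are manifestly unitary, and a one-line computation of $UV\varphi$ and $VU\varphi$ shows $UV = e^{i2\pi\alpha}VU$, i.e.\ the defining relation of $A_\alpha$; since $A_\alpha$ is the universal $C^*$-algebra for that relation, the assignment $u\mapsto U$, $v\mapsto V$ extends to a $*$-homomorphism $\pi'_\theta\colon A_\alpha\to B(\ell^2(\Z))$. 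A similarly short calculation — matching the coefficient of $\varphi_{j+1}$ with $e^{i\pi(\alpha j+\theta)}+e^{-i\pi(\alpha j+\theta)}=2\cos(\pi(\alpha j+\theta))$ and that of $\varphi_{j-1}$ with $2\cos(\pi(\alpha j-\alpha+\theta))$ — confirms $\pi'_\theta(h_\alpha)=U+U^*+V+V^* = L_{\alpha,\theta}$, while the parallel identity $\pi_\theta(h_\alpha)=H_{\alpha,2,\theta}$ is the standard definition of the almost Mathieu operator.

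The heart of the argument is the use of simplicity of $A_\alpha$ for $\alpha\notin\Q$ (see e.g.\ \cite{Effros1967,Pedersen1979,Power78}): a nonzero $*$-homomorphism out of a simple $C^*$-algebra has trivial kernel, hence is injective, and an injective $*$-homomorphism of $C^*$-algebras is isometric, so it preserves the spectrum of each element. Applying this to both $\pi'_\theta$ and $\pi_\theta$ gives $\sigma(L_{\alpha,\theta}) = \sigma_{A_\alpha}(h_\alpha) = \sigma(H_{\alpha,2,\theta})$, the middle term being the spectrum of $h_\alpha$ computed inside $A_\alpha$, which plainly does not involve $\theta$; this is exactly \eref{eq:spec_set_equality_LA_mathieu}. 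Invoking Theorem \ref{th:almost_mathieu_spec_prop}, which asserts that for irrational $\alpha$ the set $\sigma(H_{\alpha,2,\theta})$ is $\theta$-independent, a Cantor set, and of Lebesgue measure zero, then finishes the proof.

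The main obstacle is not in this reduction, which is essentially bookkeeping in the $C^*$-algebraic language, but hidden in the two deep results packaged into Theorem \ref{th:almost_mathieu_spec_prop} and used here as black boxes: the resolution of the Ten Martini Problem \cite{Avila2009} for the Cantor property, and the Aubry--Andr\'e measure conjecture \cite{Avila06} for the vanishing of the Lebesgue measure. The only point demanding genuine care on our side is that the representation $\pi'_\theta$ is somewhat unusual — $v$ is sent to a weighted backward shift rather than to a diagonal multiplication operator — so one must really check both that it respects the rotation relation and that it reproduces $L_{\alpha,\theta}$; once that is verified, the rest is automatic.
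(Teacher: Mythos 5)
Your proposal is correct and follows essentially the same route as the paper: both realize $L_{\alpha,\theta}$ and $H_{\alpha,2,\theta}$ as images of the element $h_\alpha=u+u^{-1}+v+v^{-1}$ of the rotation algebra under the representations $\pi'_\theta$ and $\pi_\theta$, use simplicity of $A_\alpha$ for irrational $\alpha$ to conclude that the spectra coincide and are $\theta$-independent, and then invoke the Ten Martini and Aubry--Andr\'e results packaged in Theorem~\ref{th:almost_mathieu_spec_prop}. The only difference is that you spell out the verification of the commutation relation and of $\pi'_\theta(h_\alpha)=L_{\alpha,\theta}$, which the paper leaves implicit.
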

\begin{remark}
  Note that all the previous considerations are equally valid for any $A_j$ such that
  $\abs{\cos(A_j \pi)} = \abs{\cos\big(\pi(\alpha j + \theta)\big)}$
  as a result of the invariance of the spectrum with respect to the signs of $a_j = \cos(A_j \pi)$ discussed in the previous section.
\end{remark}
As for the original operator $-\Delta_{\gamma, \mat A}$, we may
combine the previous observations to obtain the following theorem.
\begin{theorem}\label{th:spec_final_lin}
  Let $A_j = \alpha j + \theta$ for some $\alpha, \theta \in \R$ and every $j\in\Z$. Then for the spectrum $\sigma(-\Delta_{\gamma, \mat A})$
  the following holds:
  \begin{enumerate}[(a)]
    \item If $\alpha, \theta \in \Z$ and $\gamma = 0$, then
      $\sigma(-\Delta_{\gamma, \mat A}) = \sigma_{ac}(-\Delta_{\gamma, \mat A}) \cup \sigma_{pp}(-\Delta_{\gamma, \mat A})$ where
      $\sigma_{ac}(-\Delta_{\gamma, \mat A}) = [0,\infty)$ and $\sigma_{pp}(-\Delta_{\gamma, \mat A}) = \{n^2 | n\in\N\}$ consists of infinitely
      degenerate eigenvalues.
    \item If $\alpha = p/q$, where $p$ and $q$ are relatively prime, $\alpha j + \theta + \frac{1}{2} \notin \Z$
      for all $j = 0,\ldots, q-1$ and assumptions of part (a) do not hold, then $-\Delta_{\gamma, \mat A}$
      has infinitely degenerate eigenvalues at the points of $\{n^2 |\, n\in\N\}$ and
      an absolutely continuous part of the spectrum such that in each interval $(-\infty, 1)$
      and $\big(n^2, (n+1)^2\big)$, $n \in \N$ it consists of $q$ closed intervals possibly touching at the endpoints.
    \item If $\alpha = p/q$, where $p$ and $q$ are relatively prime, and $\alpha j + \theta + \frac{1}{2} \in \Z$ for some $j = 0,\ldots, q-1$,
      then the spectrum $-\Delta_{\gamma, \mat A}$ is of pure point type and such that in each interval $(-\infty, 1)$
      and $\big(n^2, (n+1)^2\big)$, $n \in \N$ there are exactly $q$ distinct eigenvalues and
      the remaining eigenvalues form the set $\{n^2 |\, n\in\N\}$. All the eigenvalues are infinitely degenerate.
    \item If $\alpha \notin \Q$, then $\sigma(-\Delta_{\gamma, \mat A})$ does not depend on $\theta$ and it
      is a disjoint union of the isolated-point family $\{n^2 |\, n\in\N\}$ and Cantor sets, one inside each interval
      $(-\infty, 1)$ and $\big(n^2, (n+1)^2\big)$, $n \in \N$. Moreover, the overall Lebesgue measure
      of $\sigma(-\Delta_{\gamma, \mat A})$ is zero.
  \end{enumerate}
\end{theorem}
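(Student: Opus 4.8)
The plan is to read off all four items from the general spectral decomposition of $-\Delta_{\gamma,\mat A}$ in Proposition~\ref{prop:spec:basic_for_original_op} together with what Section~\ref{sec:spectrum_general} and the present section tell us about $L_{\mat A}=L_{\alpha,\theta}$. The bridge throughout is Theorem~\ref{th:dual_unitary_rel}: on each window $I_n$ the part $(H_{\gamma,\mat A})_{I_n}$ is unitarily equivalent to $\eta^{(-1)}\big((L_{\mat A})_{\eta(I_n)}\big)$, and, as follows from the monotonicity analysis preceding Proposition~\ref{prop:spec:basic_for_original_op}, the entire function $\eta$ is strictly monotone on each $I_n$ and maps $\overline{I_n}$ homeomorphically onto $[-4,4]$, real-analytically and with $\eta'$ nonvanishing on the interior. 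Such a map transports the spectral type (absolutely continuous, singular continuous, pure point, together with multiplicities) faithfully, carries closed sets to closed sets and Cantor sets to Cantor sets, and, being $C^1$ on the compact $\overline{I_n}$ with an absolutely continuous inverse, preserves Lebesgue-null sets in both directions. The infinitely degenerate eigenvalues at $\sigma_D=\{n^2\}$ are supplied by Proposition~\ref{prop:spectrum_natural}: for $A_j=\alpha j+\theta$ only finitely many indices $j$ (at most one when $\alpha\notin\Q$) fail its hypothesis, so for every $k\in\N$ one still obtains infinitely many linearly independent eigenvectors $\psi(k^2,j)$, whence $k^2$ is an eigenvalue of infinite multiplicity.

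With these facts the cases are treated in turn. For (a), $\alpha,\theta\in\Z$ forces $A_j\in\Z$, so $\abs{a_j}=\abs{2\cos(A_j\pi)}=2$ for every $j$, and by the sign-invariance of Section~\ref{sec:spectrum_general} the operator $L_{\mat A}$ is unitarily equivalent to $\varphi\mapsto2(\varphi_{j+1}+\varphi_{j-1})$, whose spectrum is purely absolutely continuous and equals $[-4,4]$; when $\gamma=0$ one has $\eta(z)=4\cos(\pi\sqrt z)$, $I_0=[0,1)$ and $I_n=(n^2,(n+1)^2)$, so $\eta^{(-1)}([-4,4])\cap[0,\infty)$ together with $\{n^2\}$ exhausts $[0,\infty)$, the absolutely continuous part is carried over from $L_{\mat A}$ and its closure is $[0,\infty)$, there is no singular continuous spectrum, and the point spectrum is exactly $\{n^2\}$ since $L_{\mat A}$ has none. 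For (b) and (c) the operator $L_{\alpha,\theta}$ is $q$-periodic, and Theorem~\ref{th:spec_LA_for_rationals} gives respectively a purely absolutely continuous spectrum consisting of $q$ closed bands, or a pure point spectrum of $q$ infinitely degenerate eigenvalues; applying $\eta^{(-1)}$ on each $I_n$ reproduces inside every $(-\infty,1)$ and $(n^2,(n+1)^2)$ either $q$ closed bands (the absolutely continuous spectrum being a closed set; in the single degenerate sub-case $\alpha,\theta\in\Z$, $\gamma\neq0$ the band may abut an endpoint $n^2$) or $q$ infinitely degenerate eigenvalues, and adjoining $\sigma_D$ yields the stated pictures, the whole spectrum being of pure point type in (c).

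For (d), Theorem~\ref{th:spec_LA_irrational_cantor} gives that $\sigma(L_{\alpha,\theta})$ is independent of $\theta$ and is a Cantor set of Lebesgue measure zero. Moreover $\sup_j c_j<4$ here: two consecutive values $A_{j-1},A_j$ are never both integers (their difference $\alpha$ is irrational), and more quantitatively the distances of $A_{j-1},A_j$ to $\Z$ add up to at least $\mathrm{dist}(\alpha,\Z)>0$, so at least one of $\abs{\cos(\pi A_{j-1})},\abs{\cos(\pi A_j)}$ stays bounded away from $1$ uniformly in $j$; hence by~\eref{eq:spec_LA_bound} the set $\sigma(L_{\alpha,\theta})$ lies strictly inside $(-4,4)$, away from the values $\pm4=\eta(\{n^2\})$. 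Consequently its $\eta^{(-1)}$-image inside each window is a Cantor set contained in the open interval $(n^2,(n+1)^2)$, respectively $(-\infty,1)$, disjoint from $\sigma_D$; since $\eta^{(-1)}$ preserves null sets each of these Cantor sets is null, and as $\sigma_D$ is countable the total Lebesgue measure of $\sigma(-\Delta_{\gamma,\mat A})$ vanishes, while the $\theta$-independence is inherited from $L_{\alpha,\theta}$.

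The bulk of the argument is thus routine bookkeeping with the windows $I_n$ and the correspondence $\eta$. The one step that needs genuine (though mild) care is the verification that $\eta|_{I_n}$ transmits the fine structure: that it sends Cantor sets to Cantor sets — immediate once $\eta$ is checked to be injective with compact image on $\overline{I_n}$, hence a homeomorphism of compacta — and, above all, that it preserves Lebesgue-null sets. When $\gamma\neq0$ one computes $\eta'(n^2)=\gamma\pi(-1)^n/(2n^2)\neq0$, so $\eta$ is bi-Lipschitz on $\overline{I_n}$ and the claim is clear; when $\gamma=0$ the points $n^2$ are critical for $\eta$, but $\eta|_{I_n}$ and its inverse remain absolutely continuous and thus enjoy the Luzin~(N) property — and in item~(d), the only place where null-set preservation is actually invoked, this difficulty does not arise at all, since $\sigma(L_{\alpha,\theta})$ keeps a positive distance from $\eta(\{n^2\})=\{\pm4\}$.
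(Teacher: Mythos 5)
Your proposal is correct and follows essentially the same route as the paper's proof: the decomposition of Proposition~\ref{prop:spec:basic_for_original_op} combined with Theorems~\ref{th:spec_LA_for_rationals} and~\ref{th:spec_LA_irrational_cantor} for $L_{\alpha,\theta}$, transported through the bicontinuity of $\eta$ on each window $I_n$, with $\sigma(L_{\alpha,\theta})\subset(-4,4)$ keeping the Cantor sets away from $\sigma_D$ in case (d) --- indeed you supply more detail than the paper does (the uniform bound $\sup_{j}c_j<4$ for irrational $\alpha$, the null-set transport, and the degenerate sub-case of (b) with $\alpha,\theta\in\Z$, $\gamma\neq0$). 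One small slip: for rational $\alpha$ infinitely many indices $j$ may fail the hypothesis of Proposition~\ref{prop:spectrum_natural} (e.g.\ all odd $j$ for $\alpha=1/2$, $\theta=0$), not just finitely many, but infinitely many always satisfy it (whenever $j$ fails one has $A_{j-1}\in\Z$, so $j-1$ falls under case (a)), so the infinite degeneracy of each $n^2$ stands.
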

\begin{proof}
  For parts (a), (b) and (c) one uses Theorem~\ref{th:spec_LA_for_rationals}, Proposition~\ref{prop:spec:basic_for_original_op}
  and properties of function $\eta$ discussed before Proposition~\ref{prop:spec:basic_for_original_op}.
  The conclusion is implied by the bicontinuity of $\eta$ on each set $I_n$, $n \in \N$,
  and by the fact that in (b), (c) $\sigma(L_{\alpha,\theta}) \subset (-4, 4)$ follows from \eref{eq:spec_LA_bound}.
  Under the assumptions of (a), $\sigma(L_{\alpha,\theta}) = [-4, 4]$,
  and thus $\eta^{(-1)}\big(L_{\alpha,\theta}\big) = [0,\infty)$, see also Figure \ref{fig:eta_fun}. The fact that the points $n^2$, $n \in \N$
  are contained in $\sigma_{ac}(-\Delta_{\gamma, \mat A})$ results from the closeness of the absolutely continuous spectrum.

  Finally, let us prove part (d). By Theorem~\ref{th:spec_LA_irrational_cantor}, $\sigma(L_{\alpha,\theta})$ is a Cantor set with Lebesgue measure zero.
  From \eref{eq:spec_LA_bound} it follows again that $\sigma(L_{\alpha,\theta}) \subset (-4, 4)$.
  Hence, since $\eta$ is bicontinuous in each set $I_n$, $n \geq 0$, the preimage $\sigma_n = f^{(-1)}\big(\sigma(L_{\alpha,\theta})\big) \cap I_n$
  (using the notation from Proposition~\ref{prop:spec:basic_for_original_op}) is again a Cantor set contained in $(-\infty, 1)$ for $n=0$
  and in $\big(n^2, (n+1)^2\big)$ for $n \in \N$, respectively. It is easy to see that the Lebesgue measure of $\sigma_n$ is zero for every $n \geq 0$
  which implies that it is zero for the whole set. Now the sought assertion follows from Proposition~\ref{prop:spec:basic_for_original_op}.
\end{proof}
\begin{remark}
  It follows from the previous theorem that the eigenvalues $\{n^2 |\, n\in\N\}$
  are isolated points of the spectrum of $-\Delta_{\gamma, \mat A}$ if and only if
  $\gamma \neq 0$ or $\alpha \notin \Z$ or $\theta \notin \Z$.
\end{remark}
Finally, we may apply the very recent result of Last and Shamis
\cite{Last2016} which says that there is a dense $G_\delta$ set of
$\alpha$'s, for which the Hausdorff dimension of the spectrum of
$H_{\alpha, 2, \theta}$ equals zero, $\dim_H \sigma(H_{\alpha, 2,
\theta}) = 0$, see e.g. \cite{falconer, mattila} for the definitions
of Hausdorff measure and dimension. This result may be applied to
the spectrum of $-\Delta_{\gamma, \mat A}$ as a consequence of the
following proposition.
\begin{proposition}
  Let $A_j = \alpha j + \theta$ for some $\theta \in \R$, $\alpha \notin \Q$, and every $j\in\Z$.
  Then $\dim_H\sigma(-\Delta_{\gamma, \mat A}) = \dim_H\sigma(L_{\alpha,\theta})$.
\end{proposition}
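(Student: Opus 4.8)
The idea is to upgrade the reasoning used in the proof of Theorem~\ref{th:spec_final_lin}(d) so that it controls Hausdorff dimension and not merely the Lebesgue-null/Cantor character. The point is that the function $\eta$ linking $\sigma(-\Delta_{\gamma,\mat A})$ to $\sigma(L_{\alpha,\theta})$ is, on the relevant region, not just bicontinuous but bi-Lipschitz, and $\dim_H$ is a bi-Lipschitz invariant as well as being stable under countable unions, see \cite{falconer,mattila}.

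By Proposition~\ref{prop:spec:basic_for_original_op} one has $\sigma(-\Delta_{\gamma,\mat A}) = \sigma_D\cup\bigcup_{n\geq 0}\sigma_n$ with $\sigma_D = \{n^2:n\in\N\}$ countable, hence $\dim_H\sigma_D = 0$, and $\sigma_n = \eta^{(-1)}\big(\sigma(L_{\alpha,\theta})\big)\cap I_n$. By countable stability of $\dim_H$ it therefore suffices to prove that $\dim_H\sigma_n = \dim_H\sigma(L_{\alpha,\theta})$ for every $n\geq 0$. This is where the irrationality of $\alpha$ enters: the numbers $c_j = 2\big(\abs{\cos(A_{j-1}\pi)}+\abs{\cos(A_j\pi)}\big)$ cannot come close to $4$, since that would force $\alpha$ to be close to $\Z$, so $M := \sup_{j\in\Z} c_j < 4$, and \eref{eq:spec_LA_bound} yields the strict inclusion $\sigma(L_{\alpha,\theta})\subset[-M,M]\subset(-4,4)$.

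Now fix $n$. On $I_n$ the function $\eta$ is $C^1$, it takes the values $\pm 4$ only at the endpoints of $I_n$, and --- from the elementary inspection of $\eta'$ already carried out in the discussion before Proposition~\ref{prop:spec:basic_for_original_op} --- it is piecewise strictly monotone there with every interior critical value lying outside the open interval $(-4,4)$. Consequently the set $\eta^{(-1)}\big([-M,M]\big)\cap I_n$, which contains $\sigma_n$, is a finite union of compact intervals $K_n^{(1)},\dots,K_n^{(r_n)}$, each lying in the interior of $I_n$ and containing no critical point of $\eta$. On each $K_n^{(i)}$ one has $0 < \min_{K_n^{(i)}}\abs{\eta'} \leq \max_{K_n^{(i)}}\abs{\eta'} < \infty$, so $\eta$ restricts to a bi-Lipschitz homeomorphism of $K_n^{(i)}$ onto $\eta\big(K_n^{(i)}\big) = [-M,M]$, and hence carries $\sigma_n\cap K_n^{(i)}$ onto $\sigma(L_{\alpha,\theta})$. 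Bi-Lipschitz invariance of Hausdorff dimension and its finite stability then give $\dim_H\sigma_n = \max_{i}\dim_H\big(\sigma_n\cap K_n^{(i)}\big) = \dim_H\sigma(L_{\alpha,\theta})$. Taking the supremum over $n$ proves the proposition; combined with Theorem~\ref{th:spec_LA_irrational_cantor} and \eref{eq:spec_set_equality_LA_mathieu} the common value equals $\dim_H\sigma(H_{\alpha,2,\theta})$, which is precisely what makes the Last--Shamis result \cite{Last2016} applicable to $-\Delta_{\gamma,\mat A}$.

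The step I expect to require the most care is the passage from ``bicontinuous'' to ``bi-Lipschitz'', i.e. checking that $\sigma_n$ stays at a positive distance both from the endpoints of $I_n$ --- which is where the irrationality of $\alpha$ is genuinely used, through $\sigma(L_{\alpha,\theta})\subset(-4,4)$ being strict --- and from the finitely many critical points of $\eta$ inside $I_n$, whose critical values are, however, never in the open interval $(-4,4)$. Granting this, and the standard facts about Hausdorff dimension, the remaining bookkeeping is routine.
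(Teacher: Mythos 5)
Your proof is correct and follows essentially the same route as the paper's own: the decomposition of $\sigma(-\Delta_{\gamma,\mat A})$ from Proposition~\ref{prop:spec:basic_for_original_op}, the strict inclusion $\sigma(L_{\alpha,\theta})\subset(-4,4)$ obtained from \eref{eq:spec_LA_bound}, bi-Lipschitz invariance of Hausdorff dimension applied to $\eta$ on compact subsets of each $I_n$ that avoid its critical points, and countable stability. The only differences are cosmetic: you make explicit why $\sup_{j}c_j<4$ when $\alpha\notin\Z$, and your requirement that the sets $K_n^{(i)}$ contain no critical point of $\eta$ handles the case $n=0$ uniformly, whereas the paper treats it by excising a neighbourhood of $z=0$.
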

\begin{proof}
  It follows from \eref{eq:spec_LA_bound} that $\sigma(L_{\alpha,\theta}) \subset (-4, 4)$.
  By the discussion preceding Proposition \ref{prop:spec:basic_for_original_op} and with the same notation, it follows that for any $n \geq 0$,
  $\sigma_n$ is contained in some closed subinterval $J_n$ of $I_n$.
  Moreover, for $n>0$ the function $\eta$ is bi-Lipschitz on $J_n$.
  Thus the inverse $(\eta|_{J_n})^{(-1)}$ of its restriction on $J_n$ is again bi-Lipschitz.
  Hence $\sigma_n$ is the image of $\sigma(L_{\alpha,\theta})$ under bi-Lipschitz function $(\eta|_{J_n})^{(-1)}$.
  It is a known fact, that bi-Lipschitz mappings
  preserve Hausdorff dimension, see e.g. \cite[Corollary 2.4]{falconer}. Hence $\dim_H(\sigma_n) = \dim_H \sigma(L_{\alpha,\theta})$ for all $n > 0$.
  For $n = 0$ we may argue similarly for any closed set contained in $J_0 \setminus \{0\}$. The point $0$ should be omitted
  since $\eta$ is not bi-Lipschitz on open sets containing zero. Let $H_0$ be a neighbourhood of $0$.
  Then $\sigma_0 \setminus H_0$ is an image of $\sigma(L_{\alpha,\theta}) \setminus \eta(H_0)$ under a bi-Lipschitz function  $(\eta|_{J_0})^{(-1)}$.
  Since $H_0$ was arbitrary, it follows that $\dim_H(\sigma_0) = \dim_H \sigma(L_{\alpha,\theta})$.
  Finally, since countable sets have Hausdorff dimension zero, the countable stability, see e.g. Section 2.2 in \cite{falconer}, of Hausdorff measures yields the assertion.
\end{proof}
Thus, by \cite[Theorem 1]{Last2016} and
\eref{eq:spec_set_equality_LA_mathieu}, one more  assertion
follows.
\begin{corollary}
  Let $A_j = \alpha j + \theta$ for some $\alpha, \theta \in \R$ and every $j\in\Z$.
  There exist a dense $G_\delta$ set $S$, such that for every $\alpha \in S$,
  \[
    \dim_H \sigma(-\Delta_{\gamma, \mat A}) = 0
  \]
  for all $\theta$.
\end{corollary}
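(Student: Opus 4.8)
The plan is to chain the Last--Shamis theorem together with the two structural results established immediately above. First I would invoke \cite[Theorem 1]{Last2016}, which supplies a dense $G_\delta$ subset $S_0 \subset \R$ such that $\dim_H \sigma(H_{\alpha,2,\theta}) = 0$ for every $\alpha \in S_0$. By Theorem~\ref{th:almost_mathieu_spec_prop} the spectrum of $H_{\alpha,2,\theta}$, viewed as a set, is independent of $\theta$, so this dimension statement automatically holds uniformly in $\theta$; I would record that at the outset so as not to have to track $\theta$ again.

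Next I would arrange that the relevant parameters are irrational, since the machinery of Section~\ref{sec:spectrum_linear} for the Cantor regime rests on $\alpha \notin \Q$. As $\R \setminus \Q$ is itself a dense $G_\delta$ set and a countable intersection of dense $G_\delta$ sets is again dense $G_\delta$ by the Baire category theorem, I set $S := S_0 \cap (\R \setminus \Q)$; this is the set claimed in the statement. (Alternatively one notes that $S_0$ cannot contain rationals at all, because for $\alpha \in \Q$ the set $\sigma(H_{\alpha,2,\theta})$ is a finite union of nondegenerate intervals and hence has Hausdorff dimension one, so the intersection does not actually shrink $S_0$; but passing to the irrationals is the cleaner route and is all that is needed.)

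Now fix $\alpha \in S$ and an arbitrary $\theta \in \R$. Since $\alpha \notin \Q$, the set identity \eref{eq:spec_set_equality_LA_mathieu} applies, giving $\sigma(L_{\alpha,\theta}) = \sigma(H_{\alpha,2,\theta})$; as Hausdorff dimension is a function of the set alone, $\dim_H \sigma(L_{\alpha,\theta}) = \dim_H \sigma(H_{\alpha,2,\theta}) = 0$. Finally, the preceding proposition, whose hypothesis is precisely $\alpha \notin \Q$, yields $\dim_H \sigma(-\Delta_{\gamma,\mat A}) = \dim_H \sigma(L_{\alpha,\theta})$, hence $\dim_H \sigma(-\Delta_{\gamma,\mat A}) = 0$. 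Every step was carried out for an unspecified $\theta$, so the conclusion holds for all $\theta$, which completes the argument.

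There is no genuinely deep obstacle in this corollary; it is a matter of assembling already-proved facts. The two points that deserve a line of care are the bookkeeping that keeps $S$ dense $G_\delta$ after intersecting with $\R \setminus \Q$, and checking that $\theta$-independence survives each link in the chain --- which it does, being built into Theorem~\ref{th:almost_mathieu_spec_prop}, into \eref{eq:spec_set_equality_LA_mathieu}, and into the preceding proposition. One should also verify that \cite[Theorem 1]{Last2016} is phrased with a $G_\delta$ set in the full real line (or in $[0,1]$, in which case one extends periodically in $\alpha$, which alters nothing spectrally) rather than only relative to the irrationals; if the latter, intersecting with $\R \setminus \Q$ is harmless anyway.
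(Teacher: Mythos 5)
Your argument is correct and follows the same route the paper takes: Last--Shamis for $H_{\alpha,2,\theta}$, the $C^*$-algebraic set identity \eref{eq:spec_set_equality_LA_mathieu} to transfer to $L_{\alpha,\theta}$, and the preceding bi-Lipschitz proposition to transfer to $-\Delta_{\gamma,\mat A}$. The extra bookkeeping about intersecting with $\R\setminus\Q$ and tracking $\theta$-independence is a welcome tightening of what the paper leaves implicit, but it is the same proof.
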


%
%
\section{Concluding remarks}\label{sec:concl}

To conclude, recall first that for any irrational $\alpha$ and (Lebesgue) almost all $\theta$ the spectrum of the almost Mathieu operator $H_{\alpha, 2, \theta}$ is purely singularly continuous. This is a part of the more general Aubry-Andr\'{e} conjecture proven by Jitomirskaya \cite{Jitomirskaya99}. This fact motivates us to the question whether for any irrational $\alpha$ the spectrum of $L_{\alpha,\theta}$ has the same property, i.e. whether it is purely singularly continuous for Lebesgue a.e. $\theta$.

A deeper question concerns the physical meaning of the model that involves a magnetic field changing linearly along the chain. A philosophical answer could be, according the known quip of Bratelli and Robinson, that ``validity of such idealizations is the heart and soul of theoretical physics and has the same fundamental significance as the reproducibility of experimental data''. On a more mundane level, one can note that the spectral behaviour will not change if the linear field is replaced by a quasiperiodic one which changes in a saw-tooth-like fashion as long as the jumps coincide with the graph vertices. This also opens an interesting question about the spectral form and type in case when the saw-tooth shape is replaced by another periodic or quasiperiodic function.

\section*{Acknowledgements}
We are obliged to Konstantin Pankrashkin for his constructive
criticism which helped to improve the manuscript. The research was
supported by the Czech Science Foundation (GA\v{C}R) within the
project 17-01706S.

\Bibliography{99}
\bibitem{Avila06}
A.~Avila, R.~Krikorian: Reducibility or nonuniform hyperbolicity for
quasiperiodic Schr\"odinger cocycles, \emph{Ann. Math.},
\textbf{164} (2006), 911--940.
 \bibitem{Avila2009}
A.~Avila, S.~Jitomirskaya: The Ten Martini Problem, \emph{Ann.
Math.}, \textbf{170} (2009), 303--342.
 \bibitem{Azbel64}
M.Ya.~Azbel: Energy spectrum of a conduction electron in a magnetic
field, \emph{J. Exp. Theor. Phys} \textbf{19} (1964), 634--645.
 \bibitem{Avron1983}
J.~Avron, B.~Simon: Almost periodic Schr\"odinger operators II. The
integrated density of states, \emph{Duke Math. J.}, \textbf{50}
(1983), 369--391.
\bibitem{Berkolaiko2013}
G. Berkolaiko, P. Kuchment: \emph{Introduction to Quantum Graphs},
Amer. Math. Soc., Providence, R.I., 2013.
 \bibitem{Bruning2007}
J.~Br\"{u}ning, V.~Geyler, K.~Pankrashkin: Cantor and band spectra for periodic quantum
graphs with magnetic fields., \emph{Commun. Math. Phys.}, \textbf{269} (2007), 87--105.
 \bibitem{Bruning2008}
J.~Br\"{u}ning, V.~Geyler, K.~Pankrashkin: Spectra of self-adjoint extensions and applications
to solvable Schr\"{o}dinger operators, \emph{Rev. Math. Phys.}, \textbf{20} (2008), 1--70.
 \bibitem{Cattaneo1997}
C.~Cattaneo: The spectrum of the continuous Laplacian on a graph,
\emph{Monatsh. Math.}, \textbf{124} (1997), 215--235.
 \bibitem{Choi1990}
M.-D.~Choi, G.A.~Elliott, N.~Yui: Gauss polynomials and the rotation
algebra, \emph{Invent. Math.}, \textbf{99} (1990), 225--246.
 \bibitem{Dean2013}
C.R.~Dean et al.: Hofstadter's butterfly and the fractal quantum
Hall effect in moir\'{e} superlattices, \emph{Nature} \textbf{497}
(2013) 598--602.
 \bibitem{Effros1967}
E.G.~Effros, F.~Hahn: Locally compact transformation groups and
$C^*$-algebras, \emph{Mem. AMS}, vol.~73, Providence 1967.
 \bibitem{Exner1997}
P.~Exner: A duality between Schr\"odinger operators on graphs and
certain Jacobi matrices, \emph{Annales de l'I.H.P. Phys. th\'eor.},
\textbf{66} (1997), 359--371.
 \bibitem{Exner2015}
P.~Exner, S.~Manko: Spectra of magnetic chain graphs: coupling
constant perturbations, \emph{J. Phys. A: Math. Theor.}, \textbf{48}
(2015), 125302.
 \bibitem{falconer}
K. Falconer~: \emph{Fractal Geometry: Mathematical Foundations and
Applications}, 2nd ed., Wiley 2003.
 \bibitem{Harper55}
P.G.~Harper: The general motion of conduction electrons in a
uniform magnetic field, with application to the diamagnetism of
metals, \emph{Proc. Roy. Soc.} \textbf{A68} (1955), 879--892.
 \bibitem{Hofstadter1976}
D.R.~Hofstadter: Energy levels and wavefunctions of Bloch electrons
in rational and irrational magnetic fields, \emph{Phys. Rev.}
\textbf{B14} (1976), 2239--2249.
 \bibitem{Jitomirskaya99}
S.Ya.~Jitomirskaya: Metal-insulator transition for the almost
Mathieu operator, \emph{Ann. Math.}, \textbf{150} (1999),
1159--1175.
 \bibitem{Kostrykin2003}
V.~Kostrykin, R.~Schrader: Quantum wires with magnetic fluxes,
\emph{Commun. Math. Phys.} \textbf{237} (2003), 161--179.
 \bibitem{Kuhl98}
U.~Kuhl, H.-J. St\"ockmann: Microwave realization of the Hofstadter
butterfly, \emph{Phys. Rev. Lett.} \textbf{80} (1998), 3232--3235.
 \bibitem{Last2005}
Y.~Last: Spectral theory of Sturm-Liouville operators on infinite
intervals: A review of recent developments, in \emph{Sturm-Liouville
Theory} (W.O.~Amrein, A.M.~Hinz, D.B.~Pearson, eds.), Birkh\"auser,
Basel 2005; pp. 99--120.
 \bibitem{Last2016}
Y.~Last, M.~Shamis: Zero Hausdorff dimension spectrum for the almost
Mathieu operator, \emph{Commun. Math. Phys.} \textbf{348} (2016), 729--750.
 \bibitem{mattila}
P.~Mattila: \emph{Geometry of Sets and Measures in Euclidean
Spaces}, Cambridge University Press, 1995.
 \bibitem{Pankrashkin2012}
K.~Pankrashkin: Unitary dimension reduction for a class of self-adjoint extensions with applications to graph-like structures,
\emph{J. Math. Anal. Appl.} \textbf{396} (2012), 640--655.
 \bibitem{Pedersen1979}
G.K.~Pedersen: \emph{$C^*$-Algebras and Their Automorphism Groups},
 Academic
Press, 1979.
 \bibitem{Peierls33}
R.E.~Peierls: Zur Theorie des Diamagnetismus von Leitungselektronen,
\emph{Zs. Phys.} \textbf{80} (1933), 763--791.
 \bibitem{Ponomarenko2013}
L.A.~Ponomarenko et al.: Cloning of Dirac fermions in graphene
superlattices, \emph{Nature} \textbf{497} (2013), 594--597.
 \bibitem{Power78}
S.C.~Power: Simplicity of $C^*$-algebras of minimal dynamical
systems, \emph{J. London Math. Soc.}, \textbf{3} (1978), 534--538.
 \bibitem{Rieffel1981}
M.A.~Rieffel: $C^*$-algebras associated with irrational rotations,
\emph{Pacific J. Math.}, \textbf{93} (1981), 415--429.
 \bibitem{Shubin1994}
M.A.~Shubin: Discrete magnetic Laplacian, \emph{Commun. Math.
Phys.}, \textbf{164} (1994), 259--275.
 \bibitem{Simon82}
B.~Simon: Almost periodic Schr\"odinger operators: a review,
\emph{Adv. Appl. Math.}, \textbf{3} (1982), 463--490.
 \bibitem{Teschl2000}
G.~Teschl: \emph{Jacobi Operators and Completely Integrable
Nonlinear Lattices}, Mathematical Surveys and Monographs, vol.~72,
AMS, 2000.
\endbib

\end{document}